\newcommand{\diagdots}[3][-25]{%
	\rotatebox{#1}{\makebox[0pt]{\makebox[#2]{\xleaders\hbox{$\cdot$\hskip#3}\hfill\kern0pt}}}}
\newcommand\norm[1]{\left\lVert#1\right\rVert}
\theoremstyle{plain} 
\theoremstyle{plain} 
\theoremstyle{plain} 
\theoremstyle{plain} \newtheorem{lemma}{Lemma}
\theoremstyle{plain} \newtheorem{corollary}{Corollary}
\theoremstyle{remark}
\DeclareMathAlphabet{\mathpzc}{OT1}{pzc}{m}{it}
\date{\vspace{-5ex}}
\begin{document}
	\def\spacingset#1{\renewcommand{\baselinestretch}%
		{#1}\small\normalsize} \spacingset{1}
	\title{\bf Simple bootstrap for linear mixed effects under model misspecification}
	\author{Katarzyna Reluga\thanks{Division of Biostatistics, School of Public Health, 		University of California, Berkeley, U.S.A.  E-mail: katarzyna.reluga@berkeley.edu.}\;
	and\;Stefan Sperlich\thanks{Stefan Sperlich is a Professor at the Geneva School of Economics and Management, University of Geneva, Switzerland. E-mail: stefan.sperlich@unige.ch.\\
   {{The authors gratefully acknowledge support from} the Swiss National Science Foundation for the projects  200021-192345 and P2GEP2-195898. \\We would like to thank D. Flores, W. Gonzalez Manteiga, E. L\'opez-Vizca\'{\i}no, D. Morales, T. Schmid, N. Salvati and S. Ranjbar for helpful discussions.}}
	}
	\maketitle

	\begin{abstract}
		\noindent
		Linear mixed effects are considered excellent predictors of cluster-level parameters in various domains. However, previous work has shown that their performance can be seriously affected by departures from modelling assumptions. Since the latter are common in applied studies, there is a need for inferential methods which are to certain extent robust to misspecfications, but at the same time simple enough to be appealing for practitioners. We construct statistical tools for cluster-wise and simultaneous inference for mixed effects under model misspecification using straightforward semiparametric random effect bootstrap. In our theoretical analysis, we show that our methods are asymptotically consistent under general regularity conditions. In simulations our intervals were robust to severe departures from model assumptions and performed better than their competitors in terms of empirical coverage probability.  
	\end{abstract}
	
	\noindent%
	{\it Keywords:}  linear mixed model; mixed effect; robust inference; small area estimation; simultaneous interval. 
	\vfill
	
	\newpage

\section{Introduction}\label{sec:intro}
Linear mixed models are frequently used for modelling hierarchical and longitudinal data. Within this modelling framework, population parameters are represented using fixed regression parameters, whereas the extra between-cluster variation is captured by cluster-specific random effects. We consider bootstrap methods for statistically valid inference for mixed effects which are linear combinations of fixed and random effects. Mixed effects are considered excellent predictors of cluster-level parameters in various domains, e.g. small area estimation, ecology or medicine \citep[cf. monographs of][]{verbekeMolenberghs2000,Jiang2007,rao2015small}. 

Further inference on mixed parameters heavily depends on model and distributional assumptions. Bootstrap methods have been introduced to partially relax this reliance and approximate in a flexible way functions of the estimators and predictors. Although they could be derived analytically using model-dependent large sample theory, the application of the latter often leads to inaccurate results in finite samples, and it is typically not robust to model misspecifications \citep[cf.,][]{chatterjee2008parametric, reluga2021simultaneous}. 

The family of bootstrap methods for clustered data is rich, and the extensive reviews are provided by \cite{field2007bootstrapping,chambers2013random} and more recently \cite{flores2019bootstrap}. All essential procedures can be classified into three broad categories: bootstrapping by resampling clusters and observations within clusters \citep{davison1997bootstrap,mccullagh2000resampling}, bootstrapping by random weighting of estimating equations \citep{field2010bootstrapping,samanta2013bootstrapping,o2018bootstrapping}, and bootstrapping by resampling predictors of random effects and/or residuals \citep{davison1997bootstrap}. The latter is referred to as a random effect bootstrap and can be further subcategorized into parametric versions \citep{butar2003measures,hall2006parametric,chatterjee2008parametric} and semiparametric versions \citep{carpenter2003novel,hall2006nonparametric,lombardiasperlich2008,opsomer2008splines}. Regardless of the category they belong to, the main goal of all bootstrap schemes is to construct the empirical estimates which faithfully reproduce some features of the true data generating mechanism. There exist a range of criteria to evaluate the quality of bootstrap schemes for clustered data. In the context of inference for mixed parameters, the existing literature focuses on bootstrap estimation of the mean squared error which boils down to the accurate approximation of the first few moments \citep[see, e.g.][]{butar2003measures,hall2006nonparametric,chatterjee2008parametric}. In our work, we assess the ability of bootstrap methods to reproduce cumulative distribution functions of some continuous functions of mixed effects which are used in the subsequent steps of statistical inference. At this place we need to emphasize that our goal is not to compare the performance of all existing procedures to select an optimal scheme with respect to a predefined criterion. Even though such a comparison in the context of mixed effects has not been attempted yet and it could be an interesting direction for further research, it requires a careful definition of the optimality criterion which is beyond the scope of this manuscript

In this article, we construct statistical tools for cluster-wise and simultaneous inference for mixed parameters under model misspecification using simple, semiparametric random effect bootstrap as in \cite{carpenter2003novel} and \cite{opsomer2008splines}. We show that our bootstrap scheme successfully reproduces cumulative distribution functions of studentized and maximal statistics which are the core elements of our inferential tools. We thus generalize the work of \cite{reluga2021simultaneous} who develop inferential tools for linear mixed effect once the modelling assumptions are satisfied. Our theory applies to the construction of intervals and testing procedure. In our analysis, we show that our methods are asymptotically consistent under general regularity conditions. In simulations our intervals were robust to severe departures from model assumptions and performed better than their competitors in terms of empirical coverage probability. Our bootstrap-based inference is complementary to other techniques handling model misspecifications and dealing with outliers, such as robust inference \citep{ChambersEtal2006,sinha2009robust} or estimation using data transformation \citep{RojasEtal2020}. 
	
\section{Inference on linear mixed effects}\label{sec:lmm}

Consider a response vector  $y\in \mathbb{R}$ modelled by $y=X\beta+Z u+e$ where  $X\in \mathbb{R}^{n\times(p+1)}$, $Z\in \mathbb{R}^{n_j\times q}$ are known full column rank design matrices for fixed and random effects, vector $\beta\in \mathbb{R}^{p+1}$ contains fixed effects, whereas random effects $u\in \mathbb{R}^{q}$ and errors $e\in \mathbb{R}^{n}$ are assumed to be mutually independent and identically distributed with $var(e) = G$ and $var(u)=R$. We focus on the model of \cite{laird1982random}
\begin{equation}\label{eq:LMM_b}
	y_j=X_j\beta+Z_ju_j+e_j,\quad j=1,\dots,m,
\end{equation}
where $y_j\in \mathbb{R}^{n_j}$, $X_j \in \mathbb{R}^{n_j\times(p+1)}$, $Z_j\in\mathbb{R}^{n_j\times q_j}$, $e = (e_1, e_2, \dots, e_m)^T$,  $u = (u_1, u_2, \dots, u_m)^T$. We denote the total sample size with $n$, the number of clusters with $m$ and $n=\sum_{j=1}^{m}n_j$ where  $n_j$ is the number of observations in the $j^{th}$ cluster. 
Furthermore, $G$ and $R$ are block-diagonal with blocks $G_j=G_j(\delta)\in\mathbb{R}^{q_j\times q_j}$ and $R_j=R_j(\delta)\in\mathbb{R}^{n_j\times n_j}$ which depend on variance parameters $\delta=(\delta_1,...,\delta_h)^T$. Let $E(y) = X\beta$ and $var(y) = V =  R + Z GZ^T$ where $V$ is a block-diagonal with blocks $V_j = R_j + Z_j GZ_j^T$. Under normality of random effects and errors, $y_j\sim N(X_j\beta, V_j)$ and $y_j|u_j\sim N(X_j\beta + Z_ju, G_j)$. The methods of maximum likelihood and restricted maximum likelihood are often used to obtain an estimator $\hat{\delta} =(\hat{\delta}_1,\dots,\hat{\delta}_h)^t$ \citep[see, for example,][Chapter 5]{verbekeMolenberghs2000}. In contrast, $\beta$ and $u$ are estimated and predicted using two-stage techniques. In particular, in the first stage one can use maximum likelihood, estimating equations of \cite{henderson1950} or h-likelihood of \cite{lee1996hierarchical} to obtain best unbiased linear estimator $\tilde{\beta} =\beta(\delta)= (X^tV^{-1}X)^{-1}X^tV^{-1}y $ and the best unbiased linear predictor $\tilde{u}_j=u_j(\delta)=G_j Z^t_jV_j^{-1}(y_j-X_j \tilde{\beta})$. In the second stage, we replace $\delta$ with $\hat{\delta}$ which results in empirical best unbiased linear estimator $\hat{\beta}=\beta(\hat{\delta})$, and empirical best unbiased linear predictor $\hat{u}_j=u_j(\hat{\delta})$. Our goal is to develop valid inferential tools for general cluster-level parameters 
\begin{equation}\label{eq:theta}
	\theta_j = k^T_j\beta+l^T_ju_j,\quad j=1,\dots,m,
\end{equation}
with known $k_{d}\in \mathbb{R}^{p+1}$ and $l_j\in \mathbb{R}^{q_j}$. The application of the two-stage approach leads to 
\begin{equation*}
	\tilde{\theta}_j = \theta_j(\delta) = k^T_j\tilde{\beta}+l^T_j\tilde{u}_j,\quad \hat{\theta}_j = \theta_m(\hat{\delta}) = k^T_j\hat{\beta}+l^T_j\hat{u}_j \quad  j = 1, \dots,m.
\end{equation*}

We focus on the development of methods to construct $1-\alpha$ confidence (or prediction) intervals and carry out hypothesis testing for mixed parameter $\theta_j$ following the ideas of \cite{reluga2021simultaneous}. Let $\sigma^2_j = var(\hat{\theta}_j)$ be a general estimator of variability of mixed effect and $\hat{\sigma}^2_j$ its estimated version. We define a t-statistic and a maximal statistic as follows:
\begin{equation}\label{eq:statistics}
	t_j = \frac{\hat{\theta}_j- \theta_j}{\hat{\sigma}_j}, \quad M = \max_{j=1,\dots m}\left\lvert  t_j\right\rvert, \quad j = 1, \dots, m.
\end{equation}
Individual confidence interval $I_{j, \alpha}$ at $1-\alpha$-level for $\theta_j$ in \eqref{eq:theta} is a region which satisfies $P(\theta_j \in I_{j, \alpha})=1-\alpha$. To construct $I_{j, \alpha}$, it is enough to find a critical value which is a high quantile from the distributions of statistic $t_j$, that is $q_{j,\alpha}=\inf\{a\in \mathbb{R}: P(t_j\leq a)\geq 1 -\alpha\}$. We can use a similar strategy to construct simultaneous confidence intervals $I_{\alpha}$ at $1-\alpha$-level which satisfies $P(\theta_j \in I_{\alpha}\; \forall j\in[m])=1-\alpha$, where $[m] = \{1, \dots, m\}$. Let $q_{\alpha}=\inf\{a\in \mathbb{R}: P(M \leq a)\geq 1 -\alpha\}$ be a high quantile from the distribution of statistic $M$. We thus have
\begin{equation}\label{eq:ind_sim_interval}
	I_{j, \alpha}: \big\{\hat{\theta}_j  \pm q_{j,\alpha}\times \hat{\sigma}_j\big\}, \quad 	I_{\alpha} = \bigtimes_{j=1}^m 	I^s_{j,\alpha}, \quad 
	I^s_{j,\alpha} : \big\{\hat{\theta}_j  \pm q_{\alpha}\times \hat{\sigma}_j\big\},
\end{equation}
and it follows that $I_{\alpha}$ covers all mixed effects with probability $1-\alpha$ \cite[see][for more details on the importance of maximal statistic in the simultaneous inference for mixed parameters]{reluga2021jasa,reluga2021simultaneous}. Due to the central limit theorem, $q_{j,\alpha}$ is often replaced by a high quantile from the standard normal distribution or the Student's t-distribution. The relation between confidence intervals and hypothesis testing allows us to define modified statistics $t_j$ and $ M$ that can be used to carry out hypothesis testing. More specifically, let $A\in \mathbb{R}^{m'\times m}$,  $\theta_{H} = (\theta_{H_1}, \theta_{H_2}, \dots, \theta_{H_{m'}}) = A\theta \in \mathbb{R}^{m'}$ and $c = (c_1, c_2, \dots, c_{m'})=\mathbb{R}^{m'}$ be a vector of some constants, with $m' \leqslant m$. Then consider the testing hypotheses
\begin{align}\label{eq:test_proc_individual}
	H_{0j}:\theta_{H_j}=c_j\quad &vs. \quad H_1:\theta_{H_j}\neq c_j, \quad\text{(individual test)},\\
	H_0:\theta_{H}=c\quad &vs.\quad H_1:\theta_{H}\neq c \quad\text{(multiple test)}.\label{eq:test_proc_multiple}
\end{align}
To obtain test statistics for tests in \eqref{eq:test_proc_individual} and \eqref{eq:test_proc_multiple}, we need to simply replace $\theta_j$ by $c_j$ in the definition of test statistic $t_j$ in \eqref{eq:statistics}, that is
\begin{equation*}\label{eq:statistics2}
	t_{H_j} = \frac{\hat{\theta}_{H_j}- c_j}{\hat{\sigma}_j},  \quad  
	t_{H_{0j}} = \frac{\hat{\theta}_{H_j}- \theta_{H_j}}{\hat{\sigma}_j}, \quad	M_{H} = \max_{j=1,\dots m'}\left\lvert  t_{H_j} \right\rvert, \quad M_{H_0} = \max_{j=1,\dots m'}\left\lvert  t_{H_{0j}} \right\rvert,
\end{equation*}
where $t_{H_0j}$ and $M_{H_0}$ are for retrieving the critical values. Tests using statistics $t_{H_j}$ and $M_{H}$ reject $H_{0j}$ and $H_0$ at the $\alpha$-level if $t_{H_j}\geq q_{H_{0j},\alpha}$ and $M_{H} \geq q_{H_{0}, \alpha}$ where $q_{H_{0j},\alpha}=\inf\{a\in \mathbb{R}: P(t_{H_j}\leq a)\geq 1 -\alpha\}$ and $q_{H_{0}, \alpha}=\inf\{a\in \mathbb{R}: P(M_{H_0} \leq a)\geq 1 -\alpha\}$.

Construction of the studentized statistics in \eqref{eq:statistics} requires the estimation of $\hat{\sigma}^2_j$. The most common measure to assess the variability of the prediction is the mean squared error $\mathrm{MSE}(\hat{\theta}_j)=E(\hat{\theta}_j-\theta_j)^2$, where $E$ denotes the expectation with respect to model \eqref{eq:LMM_b}. Nevertheless, following \cite{chatterjee2008parametric}, a simpler choice of $\sigma^2_j = l^t_j(G_j-G_jZ^t_jV_j^{-1}Z_jG_j)l_j$
which accounts for the variability of $\theta_j$ without accounting for 
the estimation of $\beta$ or $\delta$ lead to the most satisfactory numerical results. Simulation results showing finite sample performance of the intervals constructed using other variability estimators can be found in our Supplementary Material.

\section{Inference robust to misspecifications by semiparametric bootstrap}\label{sec:nonparametric_booststrap}
We present a bootstrap scheme to construct individual and simultaneous intervals which are robust to model misspecifications. Denote bootstrap generated observations by
\begin{equation}\label{eq:bootstrap_sample}
	y^*=X\hat{\beta}+Zu^*+e^*,
\end{equation}
where $e^*$ and $u^*$ are bootstrap replica of the random components in the model. We further set $\delta^*= \hat{\delta}$, $V^*=\hat{V}$, $G^* = \hat{G}$ and define
$\tilde{\beta}^* =\beta(\delta^*)= (X^tV^{*-1}X)^{-1}X^tV^{*-1}y^*$, $\tilde{u}^*_j=u_j(\delta^*)=G^*_j Z^t_jV_j^{*-1}(y^*_j-X_j \tilde{\beta}^*)$. In addition, let $\hat{\delta}^*$ be an estimated version of $\delta^*$ obtained by regressing  $y^*$ on $X$. Then we have $\hat{\beta}^* = \beta(\hat{\delta}^*)$ and  $\hat{u}^*_j= u_j(\hat{\delta}^*)$. Bootstrap mixed effects are thus defined as
\begin{align*}\label{eq:theta_boot}
	\theta^*_j = k^T_j\beta^*+l^T_ju^*_j, 	
	& &\tilde{\theta}^*_j = \theta_j(\delta^*) = k^T_j\tilde{\beta}^*+l^T_j\tilde{u}^*_j,
	& &\hat{\theta}^*_j = \theta_j(\hat{\delta}^*) = k^T_j\hat{\beta}^*+l^T_j\hat{u}^*_j.
\end{align*}
The bootstrap versions of the statistics of interest in \eqref{eq:statistics} are given by
\begin{equation}\label{eq:boot_statistics}
	t^*_j = \frac{\hat{\theta}^*_j- \theta^*_j}{\hat{\sigma}^*_j}, \quad M^* = \max_{j=1,\dots m}\left\lvert  t^*_j\right\rvert.
\end{equation}
We use statistics in \eqref{eq:boot_statistics} to construct bootstrap equivalents of intervals in \eqref{eq:ind_sim_interval}, that is
\begin{align}\label{eq:ind_interval_b}
	q^*_{j,\alpha}=\inf\{a\in \mathbb{R}: P(t^*_j\leq a)\geq 1 -\alpha\}, \;I^*_{j, \alpha}: \big\{\hat{\theta}_j  \pm q^*_{j,\alpha}\times \hat{\sigma}_j\big\}, \; j = 1, \dots, m,\\
	\label{eq:sim_interval_b}
	q^*_{\alpha}=\inf\{a\in \mathbb{R}: P(M^* \leq a)\geq 1 -\alpha\}, \; I^*_{\alpha} = \bigtimes_{j=1}^m 	I^{*s}_{j,\alpha}, \;
	I^{*s}_{j,\alpha} : \big\{\hat{\theta}_j  \pm q^*_{\alpha}\times \hat{\sigma}_j\big\}.
\end{align}
The most popular choice is to use a parametric bootstrap and draw $e^*$ and $u^*$ from a postulated normal distribution with estimated variance parameters. In contrast, we use a semiparametric bootstrap method introduced by \cite{carpenter2003novel} and generalised by \cite{opsomer2008splines}. The empirical performance of this bootstrap scheme for fixed parameters has been studied by \cite{chambers2013random}. The goal is to mimic the data generating process in model \eqref{eq:LMM_b}. Before writing down explicitly the bootstrap algorithm, we provide some motivation behind it. Let $\tilde{y}=X\tilde{\beta} = X(X^TV^{-1}X)^{-1}X^TV^{-1}y = Hy$, $\tilde{e} =  y -X\tilde{\beta} -Z\tilde{u} = (I - ZGZ^TV^{-1})(I-H)y =  RV^{-1}(I-H)y$ and $\hat{e} = y -X\hat{\beta} -Z\hat{u}$. Then, by some algebraic transformations we have $I - ZGZ^TV^{-1} = RV^{-1}$,  which leads to  $var(\tilde{u}) = GZ^T\{V^{-1}(I - H)\}ZG $ and $var(\tilde{e}) =  R\{V^{-1}(I-H)\}R$. Thus, we should re-scale $\hat{e}$ and $\hat{u}$ before sampling with replacement to avoid the effects of shrinkage \citep{morris2002blups}. Centring, that is subtracting the empirical mean, is also advisable to assure that the empirical re-scaled residuals have mean zero. This suggests sampling from $\hat{e}_{sc}$ and $\hat{u}_{sc}$ defined as follows
\begin{align*}\label{eq:e_scaled_centering}
	\hat{e}_{sc} &= \hat{e}_{s}-\bar{\hat{e}}_{s}, \quad \bar{\hat{e}}_{s} = \sum_{i=1}^{n} \frac{\hat{e}_{si}}{n}, \quad \hat{e}_{s} = [R\{V^{-1}(I-H)\}]^{-1/2}\hat{e},\\
	\hat{u}_{sc} &= \hat{u}_{s}-\bar{\hat{u}}_{s}, \quad  \bar{\hat{u}}_{s} = \sum_{i=1}^{n} \frac{\hat{u}_{sj}}{m}, \quad
	\hat{u}_s = [GZ^T\{V^{-1}(I - H)\}Z]^{-1/2}\hat{u}.
\end{align*}
The algorithm to obtain bootstrap quantiles and construct intervals in \eqref{eq:ind_interval_b} and \eqref{eq:sim_interval_b} is:

{\sl A semiparametric random effects bootstrap algorithm}

\begin{enumerate}[topsep=0pt]
	\setlength\itemsep{-1em}
\item Obtain consistent estimators $\hat{\beta}$ and $\hat{\delta}$.\\
\item For $b=1$ to $b=B$: 
  \begin{enumerate}[topsep=0pt]	\setlength\itemsep{-1em}
  \item Obtain vectors 
        $u^{*}\in \mathbb{R}^{m}$, $e^{*}\in \mathbb{R}^{n}$ by sampling independently with replacement from $\hat{u}_{sc}$ and $\hat{e}_{sc}$.\\
	\item  Generate sample	$y^{*}=X\hat{\beta}+Zu^{*(b)}+e^{*}$ in \eqref{eq:bootstrap_sample} and obtain $\theta^{*}_j$, $j = 1, \dots, m$.\\
   \item Fit LMM to bootstrap sample from the previous step.\\
	\item Obtain bootstrap estimates $\hat{\delta}^{*}$, $\hat{\beta}^{*}$, $\hat{\theta}_{j}^{*}$, $t^{*}_j$ and $M^{*}$, $j = 1, \dots, m$.\\
\end{enumerate}	
		\item Estimate critical values $q^*_{j,\alpha}$, $q^*_{\alpha}$ by the $[\{(1-\alpha)B\}+1]^{th}$ order statistics of $t^{*}_j$ and $M^*$, $j = 1, \dots, m$. \\
		\item Construct bootstrap intervals as indicated in \eqref{eq:ind_interval_b} and \eqref{eq:sim_interval_b}.
	\end{enumerate}
Fisher consistency of $\hat{\delta}^{*}$ and $\hat{\beta}^{*}$ obtained using semiparamteric bootstrap in the above algorithm has been proved by \cite{carpenter2003novel}. In Lemma \ref{lemma:consistency_t_boot} and \ref{lemma:consistency_M_boot} we show the consistency of statistics $t^*_j$ and $M^*$.

\begin{lemma}[Consistency of  $t^*_j$]\label{lemma:consistency_t_boot}
	Let $F_{t_j}(a) = P(t_j < a)$, $F_{t^*_j}(a) = P(t^*_j < a)$ be the cumulative distribution functions of statistics $t_j$, $t^*_j$ defined in \eqref{eq:statistics} and \eqref{eq:boot_statistics}. If the regularity conditions in Appendix 1 
	are satisfied, then we have in probability
	\begin{equation*}
		\sup_{a\in \mathbb{R}}\left\lvert F_{t_j}(a) - F_{t^*_j}(a) \right\rvert\to 0.
	\end{equation*}
\end{lemma}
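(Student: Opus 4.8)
The strategy is the standard one for bootstrap consistency: show that $F_{t_j}$ converges to a continuous limit law $\Phi$, that conditionally on the data $F_{t^*_j}$ converges in probability to the same $\Phi$, and conclude with $\sup_a|F_{t_j}(a)-F_{t^*_j}(a)|\le\sup_a|F_{t_j}(a)-\Phi(a)|+\sup_a|\Phi(a)-F_{t^*_j}(a)|$. For the first convergence I would produce a stochastic expansion of the numerator: splitting $\hat\theta_j-\theta_j=(\hat\theta_j-\tilde\theta_j)+(\tilde\theta_j-\theta_j)$, a first-order Taylor expansion of $\delta\mapsto\theta_j(\delta)$ around $\delta$ together with the consistency rate for $\hat\delta$ and the smoothness conditions of Appendix~1 shows $\hat\theta_j-\tilde\theta_j=o_P(\sigma_j)$, while $\tilde\theta_j-\theta_j=k_j^T(\tilde\beta-\beta)+l_j^T(\tilde u_j-u_j)$ is linear in the independent cluster quantities $u_1,e_1,\dots,u_m,e_m$ through $\tilde\beta-\beta=(X^TV^{-1}X)^{-1}X^TV^{-1}(Zu+e)$ and $\tilde u_j-u_j=(G_jZ_j^TV_j^{-1}Z_j-I)u_j+G_jZ_j^TV_j^{-1}e_j-G_jZ_j^TV_j^{-1}X_j(\tilde\beta-\beta)$. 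Dividing by $\sigma_j$ and using $\hat\sigma_j/\sigma_j\to1$ in probability -- immediate because the ``simple'' variance estimator depends on the data only through the consistent $\hat\delta$ -- yields $t_j=S_{nj}+o_P(1)$ with $S_{nj}$ a standardized sum of independent, mean-zero terms, so a Lindeberg--Feller central limit theorem with the moment and design hypotheses of Appendix~1 gives $F_{t_j}\to\Phi$, and P\'olya's theorem upgrades this to uniform convergence.

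Next I would run the identical expansion in the bootstrap world. Conditionally on the sample, $u^*_j$ and $e^*_j$ are i.i.d.\ draws from the empirical laws of the rescaled, centred predictors $\hat u_{sc}$ and $\hat e_{sc}$, and, since $\delta^*=\hat\delta$, $V^*=\hat V$, $G^*=\hat G$, the same algebra gives $t^*_j=S^*_{nj}+o_{P^*}(1)$, where $S^*_{nj}$ arises from $S_{nj}$ by substituting $\hat V,\hat G,\hat H$ for $V,G,H$ and $u^*_j,e^*_j$ for $u_j,e_j$. Because $\hat\beta\to\beta$, $\hat\delta\to\delta$ and, by the Fisher consistency of $\hat\beta^*,\hat\delta^*$ established by \citet{carpenter2003novel}, the coefficient matrices in $S^*_{nj}$ converge to those in $S_{nj}$, while the conditional first two moments of $u^*_j,e^*_j$ -- the empirical mean, zero by the centring, and the empirical second moments of the rescaled residuals -- converge in probability to $\mathrm{var}(u_j)=G_j$ and $\mathrm{var}(e_j)=R_j$. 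A conditional Lindeberg central limit theorem, whose Lindeberg condition follows from a uniform-integrability bound on the empirical residual moments, then yields $\sup_a|F_{t^*_j}(a)-\Phi(a)|\to0$ in probability, and a conditional use of Slutsky's lemma absorbs $\hat\sigma^*_j/\sigma_j\to1$; the displayed triangle inequality finishes the argument.

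The main obstacle is the bootstrap-side moment bookkeeping: one must show that the empirical distribution of $\hat u_{sc}$, and likewise that of $\hat e_{sc}$, reproduces in the limit precisely those features of the laws of $u_j$ and $e_j$ that enter $\Phi$ -- under the Appendix~1 conditions only the first two moments, because the leading term of $t_j$ is asymptotically Gaussian. This is awkward since $\hat u_{sc}$ is built from shrunken, mutually dependent BLUPs and rescaled by square roots of the \emph{estimated} operators $[\hat R\{\hat V^{-1}(I-\hat H)\}]^{-1/2}$ and $[\hat G Z^T\{\hat V^{-1}(I-\hat H)\}Z]^{-1/2}$; I would first replace the estimated rescaling and centring operators by their population counterparts -- the error being $o_P(1)$ under the bounded-condition-number conditions on the $V_j$ and on $n^{-1}X^TV^{-1}X$ -- and then apply a law of large numbers across the independent clusters. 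A secondary technical point is that the $\hat\delta$- and $\hat\delta^*$-induced remainders must be negligible relative to $\sigma_j$, which may itself vanish (for instance when $n_j\to\infty$); this is exactly where the quantitative parts of Appendix~1, namely a rate for $\hat\delta-\delta$ and a lower bound on $\sigma_j$, enter, and these bounds must moreover be uniform in $j$ so that Lemma~\ref{lemma:consistency_M_boot} can build on the present one.
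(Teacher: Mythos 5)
Your proposal is correct and shares the paper's overall skeleton---show that $t_j$ and, conditionally on the data, $t^*_j$ converge to the same continuous limit law, then pass to uniform convergence and absorb the studentization via Slutsky---but you fill in the middle quite differently. The paper's own proof is a short sketch: it writes $t_j=f(\vartheta,\hat\vartheta,u_j)$ and $t^*_j=f(\hat\vartheta,\hat\vartheta^*,u^*_j)$, gets consistency of $\hat\vartheta^*$ from the bootstrap score equation $E^*\{s_n^*(\vartheta)\}=0$ at $\vartheta=\hat\vartheta$ (Fisher consistency as in \citet{carpenter2003novel}), delegates the random-effect part to \citet{field2007bootstrapping} under Condition~4 in line with \citet{jiang1998asymptotic}, and never identifies the limit distribution. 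You instead give a self-contained linearization ($\hat\theta_j-\tilde\theta_j=o_P(\sigma_j)$, Lindeberg--Feller for the leading term, P\'olya for uniformity) and, on the bootstrap side, a conditional CLT whose hypotheses you verify by matching the first two conditional moments of the resampled, rescaled, centred residuals to $G_j$ and $R_j$. Your route buys transparency: it makes explicit that the Gaussian limit---and hence the sufficiency of two-moment matching by $\hat u_{sc}$ and $\hat e_{sc}$---is precisely where the joint asymptotics $m\to\infty$, $n_j\to\infty$ of Condition~4 enters (with $n_j$ fixed the non-normal $u_j$ contribution would not wash out, and the empirical law of $\hat u_{sc}$ would have to reproduce the whole distribution, which is the substance of the results the paper cites), and it also explains why the centring and rescaling by the estimated operators matter, which the paper only motivates informally in Section~3. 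The paper's route buys brevity and generality: by arguing at the level of estimating equations and citing the random-effect-bootstrap consistency results, it sidesteps the moment bookkeeping for the shrunken, mutually dependent BLUP-based residuals that you rightly single out as the main technical obstacle of the direct approach.
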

	\begin{proof}
	Without loss of generality, we assume that the sequence of estimators $t_j$ converges to a continuous distribution function $F$. A standard way of proving the consistency of bootstrap procedure in Lemma \ref{lemma:consistency_t_boot} \citep[see, for example,][Chapter 23]{van2000asymptotic} is to show that, for every $a$ $F_{t_j}(a)\to F(a)$ in distribution and $F_{t^*_j}(a)\to F(a)$ 
	given the original sample size in probability. Let $\hat{\vartheta}^* = (\hat{\beta}^*, \hat{\delta}^*)$ and $E^*$ be a bootstrap operator of the expected value. Then $t_j$ and $t^*_j$ can be written as $t_j=f(\vartheta, \hat{\vartheta}, u_j)$ and $t^*_j=f(\hat{\vartheta}, \hat{\vartheta}^*, u^*_j)$, respectively for a continuous and a differentiable function  $f$. Consider a general score equation $s_n(\vartheta)$ defined in Appendix 
	and its bootstrap equivalent $s_n^*(\vartheta) =\sum_{j=1}^{m}\sum_{i = 1}^{n_j} \psi (y^*_{ij},\vartheta)$ with $y$ replaced by $y^*$. It follows that $E^*\{s_n^*(\vartheta)\} = 0$ at $\vartheta = \hat{\vartheta}$ which yields the consistency of the sequence of bootstrap estimators $\hat{\vartheta}^*$. The consistency of random effects under random effect bootstrap was proved by \cite{field2007bootstrapping} under Condition~$4$ in Appendix 
	which is in alignment with results of \cite{jiang1998asymptotic}. We thus have that $\sqrt n(\hat{\theta}^*_j- \theta^*_j)$ and $\sqrt n(\hat{\theta}_j- \theta_j)$ converge to the same distribution. Final consistency result follows by Slutsky's lemma. 
\end{proof}

Corollary \ref{cor:intervals} ensures the consistency of the individual confidence intervals. 

\begin{corollary}[Consistency of $I^*_{j,\alpha}$] \label{cor:intervals}
	Lemma \ref{lemma:consistency_t_boot} implies that under the same assumptions 
	\begin{equation*}
		P(\theta_j \in I^*_{j,\alpha}) \to 1-\alpha.
	\end{equation*}
\end{corollary}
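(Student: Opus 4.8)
The plan is to lift the distributional statement of Lemma~\ref{lemma:consistency_t_boot} to a statement about quantiles, and then read off the coverage probability with Slutsky's lemma.

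\textbf{Step 1 (quantile consistency).} As noted in the proof of Lemma~\ref{lemma:consistency_t_boot}, both $F_{t_j}$ and $F_{t^*_j}$ (the latter conditionally on the sample) converge to the same continuous limit distribution function $F$; together with the uniform conclusion of Lemma~\ref{lemma:consistency_t_boot} this gives $\sup_{a\in\mathbb{R}}|F_{t^*_j}(a)-F(a)|\to 0$ in probability. I would then invoke the standard fact \citep[Lemma~21.2]{van2000asymptotic} that convergence of distribution functions to a limit which is continuous and strictly increasing at its $(1-\alpha)$-quantile forces the corresponding empirical quantiles to converge there. Since the regularity conditions of Appendix~1 ensure the limit law admits a positive density near the relevant quantile, this yields $q^*_{j,\alpha}\to q_{\alpha}$ in probability, where $q_\alpha$ is the matching population critical value; in words, the bootstrap critical value is consistent for the one that would be used were $F$ known.

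\textbf{Step 2 (coverage).} Because $\hat\sigma_j$ cancels in the studentized ratio, the coverage event $\{\theta_j\in I^*_{j,\alpha}\}$ is, by \eqref{eq:ind_interval_b}, an event of the form $\{|t_j|\le q^*_{j,\alpha}\}$ (with $q^*_{j,\alpha}$ read off the law of $|t^*_j|$, consistently with the maximal-statistic construction used for $I^*_\alpha$). Since $q^*_{j,\alpha}$ converges in probability to the deterministic constant $q_\alpha$ while $t_j$ converges in distribution to a variable with the continuous law $F$, Slutsky's lemma gives $P(\theta_j\in I^*_{j,\alpha})\to 1-\alpha$, the limit being identified through the definition of $q_\alpha$ as the $(1-\alpha)$-critical value of the limit law.

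The only step I expect to need genuine care is Step~1: one must ensure that the point at which the limit distribution function is evaluated is a point of \emph{strict} increase, not merely of continuity — this is what licenses the passage from convergence of distribution functions to convergence of quantiles, and it is exactly what the smoothness part of the Appendix~1 conditions provides. Step~2 is routine modulo the elementary bookkeeping that aligns the two-sided interval $\hat\theta_j\pm q^*_{j,\alpha}\hat\sigma_j$ with the critical value $q^*_{j,\alpha}$. In short, once Lemma~\ref{lemma:consistency_t_boot} is available, the corollary is the standard ``consistent critical value $\Rightarrow$ asymptotically valid interval'' argument.
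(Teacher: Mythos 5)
Your proposal is correct and follows essentially the same route as the paper's proof: the paper likewise passes from the weak convergence in Lemma~\ref{lemma:consistency_t_boot} to convergence of the bootstrap quantile $q^*_{j,\alpha}$ toward $F^{-1}(1-\alpha)$ (invoking the argument of Lemma~23.3 in \cite{van2000asymptotic}, whose content you reconstruct via quantile convergence plus Slutsky) and then reads off the limiting coverage probability. The only cosmetic difference is that the paper writes the coverage event in one-sided form $P(\theta_j\geq \hat{\theta}_j - \hat{\sigma}_j q^*_{j,\alpha})$ while you keep the two-sided interval with the law of $|t^*_j|$; this is bookkeeping, not a different argument.
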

\begin{proof}
	The proof follows along the same line as Lemma 23.3 in \cite{van2000asymptotic}. By Lemma \ref{lemma:consistency_t_boot}, the sequences of distribution functions $F_{t_j}$ and  $F_{t^*_j}$ converge weakly to $F$, which implies the convergence of their quantile functions $F^{-1}_{t_j}$ and  $F^{-1}_{t^*_j}$ at every continuity point. We thus conclude that $q^*_{j,\alpha}=F^{-1}_{t^*_j}(1-\alpha) \to F^{-1}(1-\alpha) $ almost surely, and
	\begin{equation*}
		P(\theta_j\geq \hat{\theta}_j - \hat{\sigma}_j q^*_{j,\alpha})= P\left( \frac{\hat{\theta}_j- \theta_j}{\hat{\sigma}_j} \leq  q^*_{j,\alpha} \right)\to P\left\{ t_j \leq F^{-1}(1-\alpha)\right\} = 1- \alpha
	\end{equation*}
	which completes the proof. 
\end{proof}
The consistency of $M^*$ does not follow from Lemma \ref{lemma:consistency_t_boot} by the delta method, because $\max$ function is not differentiable.  Instead, Lemma \ref{lemma:consistency_M_boot} provides a heuristic proof based on results known from the extreme value theory. 
\begin{lemma}[Consistency of  $M^*$]\label{lemma:consistency_M_boot}
	Let $M$ and $M^*$ be as defined in \eqref{eq:statistics} and  \eqref{eq:boot_statistics}. If the regularity conditions in Appendix 
	are satisfied and Lemma \ref{lemma:consistency_t_boot} holds, then we have in probability
	\begin{equation*}
		\sup_{a\in \mathbb{R}}\left\lvert F_{M}(a) - F_{M^*}(a) \right\rvert\to 0.
	\end{equation*}
\end{lemma}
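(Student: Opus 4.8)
The plan is to avoid the (failing) delta-method route and instead pass $M$ and $M^*$ to a common extreme-value limit. Write $F_M(a) = P(M\le a) = P\{|t_j|\le a,\ j=1,\dots,m\}$ and $F_{M^*}(a) = P\{|t_j^*|\le a,\ j=1,\dots,m\}$. The argument has three steps: (i) show that the joint law of $(t_1,\dots,t_m)$ is, up to uniformly negligible terms, that of $m$ independent variables with the continuous limit law $F$ of Lemma~\ref{lemma:consistency_t_boot}, and likewise for $(t_1^*,\dots,t_m^*)$; (ii) when $m\to\infty$, invoke the Fisher--Tippett--Gnedenko theorem to obtain normalizing sequences $a_m>0$, $b_m$, depending only on $m$ and $F$, for which $P\{(M-b_m)/a_m\le x\}$ and $P\{(M^*-b_m)/a_m\le x\}$ both converge to the same extreme-value law $G$; (iii) transfer this to the Kolmogorov distance via P\'olya's theorem. (If $m$ stays bounded, the continuous mapping theorem applied to the max of the jointly convergent vectors already gives the claim, so the interesting case is $m\to\infty$.)

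For step (i), decompose $\hat\theta_j-\theta_j = k_j^T(\hat\beta-\beta) + l_j^T(\hat u_j-u_j)$. Under the regularity conditions of Appendix~1 the fixed-effect contribution $k_j^T(\hat\beta-\beta)/\hat\sigma_j$ is of smaller order and is the only source of leading-order dependence across clusters, while the term $l_j^T(\hat u_j-u_j)/\hat\sigma_j$ dominates and involves only the $j$-th cluster (the dependence on $\hat\delta$ being, again, a lower-order pooled perturbation); since the clusters are independent in \eqref{eq:LMM_b}, $(t_1,\dots,t_m)$ is asymptotically a vector of independent variables, each converging to $F$. The same decomposition holds under the bootstrap: bootstrap clusters are generated independently in step~2(a) of the algorithm, $\hat\beta^*-\hat\beta$ is of smaller order, and by Lemma~\ref{lemma:consistency_t_boot} each $t_j^*$ converges to the same $F$. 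Hence $F_M$ and $F_{M^*}$ are, asymptotically, the same functional of the same ingredients.

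In step (ii), assuming $F$ lies in a max-domain of attraction (the Gumbel domain when $F$ is normal), classical extreme-value theory for maxima of independent variables furnishes $a_m, b_m$ — functions of $m$ and $F$ alone, hence \emph{identical} for $M$ and $M^*$ — with $P\{(M-b_m)/a_m\le x\}\to G(x)$ and $P\{(M^*-b_m)/a_m\le x\}\to G(x)$. Since $G$ is continuous, P\'olya's theorem upgrades both to uniform convergence in $x$; substituting $x=(a-b_m)/a_m$ and using the triangle inequality yields
\begin{equation*}
	\sup_{a\in\mathbb{R}}\left\lvert F_M(a)-F_{M^*}(a)\right\rvert \le \sup_{x\in\mathbb{R}}\left\lvert P\{(M-b_m)/a_m\le x\}-G(x)\right\rvert + \sup_{x\in\mathbb{R}}\left\lvert G(x)-P\{(M^*-b_m)/a_m\le x\}\right\rvert \longrightarrow 0
\end{equation*}
in probability, which is the claim.

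The main obstacle — and the reason the argument remains heuristic — is making steps (i) and (ii) simultaneously rigorous: for the extreme-value limit of the triangular array $\{t_j\}_{j\le m}$ to be dictated purely by the marginal limit $F$, one needs $t_j\Rightarrow F$ to hold uniformly in $j$ and at a rate fast enough relative to the growth of $m$ (a Berry--Esseen-type bound on each studentized cluster statistic), and one needs the cross-cluster dependence carried by $\hat\beta$ and $\hat\delta$ to remain negligible \emph{after} the maximization, not merely pointwise in $a$. We therefore state the result at the heuristic level customary in the literature on maxima of mixed-effect statistics.
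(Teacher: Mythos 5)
Your proposal takes essentially the same route as the paper's own (explicitly heuristic) proof: treat the $t_j$ and $t_j^*$ as asymptotically independent with a common marginal limit, invoke classical extreme-value theory so that $M$ and $M^*$ fall in the same max-domain of attraction with common normalizing constants, and carry the result over to the bootstrap side by the arguments of Lemma~\ref{lemma:consistency_t_boot}. Your write-up is somewhat more explicit than the paper's (the decomposition justifying asymptotic independence, P\'olya's theorem plus the triangle inequality to obtain the Kolmogorov-distance statement, and the bounded-$m$ case), and you flag the same triangular-array/uniformity gap the paper leaves open, so it is correct at the level of rigor the paper itself adopts.
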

\begin{proof}
	Observe that $F_{M}(a) = P(M < a)  = P (t_1\leq a, \dots, t_m\leq a, -t_1\leq a, \dots, -t_m\leq a)$.
	Since $t_j$, $j = 1, \dots, m$ are asymptotically independent and identically distributed, we have an approximation $F_M(a)\approx \prod_{j=1}^{2m}F_j(a)$
	with $F_j(a)$ some proper, non-degenerate distributions. By classical results in extreme value theory \citep{beirlant2004statistics,embrechts2013modelling}, we can assume that there exist sequences of re-normalizing constants $\{b_j > 0\}$, $\{c_j\}$ such that $P\{(M_{\theta} - c_j)/b_j \leq a\} $ converges to a  non-degenerate distribution function $H(a)$ as $j\to \infty$, i.e., the $F_j(a)$ belong to the max-domain of attraction of some non-degenerate, continuous distribution $H(a)$. The consistency of $F_{M^*}(a)$ follows by evoking the properties of the random effects bootstrap and the arguments used in the proof of Lemma \ref{lemma:consistency_t_boot}.
\end{proof}
\begin{corollary}\label{cor:sim_int}
	Lemma \ref{lemma:consistency_M_boot} implies that under the same assumptions 
	\begin{equation*}
		P(\theta_j \in I^*_{\alpha} \; \forall j\in[m]) \to 1-\alpha.
	\end{equation*}
\end{corollary}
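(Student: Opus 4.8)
The plan is to mirror the structure of Corollary~\ref{cor:intervals}, transporting the Kolmogorov--Smirnov convergence of Lemma~\ref{lemma:consistency_M_boot} into a statement about simultaneous coverage. First I would recall that, by the construction in \eqref{eq:sim_interval_b}, the event $\{\theta_j \in I^{*s}_{j,\alpha}\ \forall j\in[m]\}$ is exactly the event $\{|\hat\theta_j - \theta_j| \le q^*_\alpha\,\hat\sigma_j\ \forall j\in[m]\}$, which by the definition of the studentized statistics $t_j$ in \eqref{eq:statistics} equals $\{\max_{j\in[m]}|t_j| \le q^*_\alpha\} = \{M \le q^*_\alpha\}$. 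Hence $P(\theta_j \in I^*_\alpha\ \forall j\in[m]) = P(M \le q^*_\alpha) = F_M(q^*_\alpha)$ (up to the usual caveat about continuity, which I address below), so the whole claim reduces to showing $F_M(q^*_\alpha) \to 1-\alpha$.

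Next I would argue that $q^*_\alpha \to q_\alpha$, where $q_\alpha = F_M^{-1}(1-\alpha)$. By Lemma~\ref{lemma:consistency_M_boot} the bootstrap CDF $F_{M^*}$ converges uniformly (hence weakly) to $F_M$, and since $F_M$ is assumed continuous and strictly increasing in a neighbourhood of its $(1-\alpha)$-quantile, weak convergence of distribution functions implies convergence of the corresponding quantile functions at the continuity point $1-\alpha$, exactly as invoked in the proof of Corollary~\ref{cor:intervals} via Lemma~23.3 of \cite{van2000asymptotic}. Thus $q^*_\alpha = F_{M^*}^{-1}(1-\alpha) \to q_\alpha$ in probability (or almost surely, along the lines already used). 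Combining this with the continuity of $F_M$, a Slutsky/continuous-mapping argument gives $F_M(q^*_\alpha) \to F_M(q_\alpha) = 1-\alpha$, which is the assertion.

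The one genuinely delicate point — and the step I expect to be the main obstacle — is the passage from $P(M \le q^*_\alpha)$ to $F_M(q^*_\alpha)$ and the justification that $F_M(q_\alpha) = 1-\alpha$ rather than merely $\ge 1-\alpha$. This is where the continuity of the limiting law $H$ from the extreme-value argument in Lemma~\ref{lemma:consistency_M_boot} is essential: because $M$ is a maximum of a growing number of statistics, one must know that its (asymptotic) distribution has no atom at $q_\alpha$, so that $P(M \le q_\alpha) = P(M < q_\alpha) = F_M(q_\alpha)$ and the infimum defining the quantile is actually attained with equality. I would therefore state explicitly that we work under the regularity conditions of the Appendix guaranteeing that the $F_j$ lie in a max-domain of attraction with continuous limit $H$, invoke Lemma~\ref{lemma:consistency_M_boot}, and then close the argument with the same quantile-convergence lemma used for Corollary~\ref{cor:intervals}. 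The remaining algebra is routine, so I would not spell it out beyond noting the reduction above.
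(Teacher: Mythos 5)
Your proposal is correct and follows essentially the same route as the paper: the paper's proof simply says the argument of Corollary~\ref{cor:intervals} goes through with $t_j$ replaced by $M$, which is precisely your reduction of the simultaneous coverage event to $\{M \le q^*_\alpha\}$ followed by quantile convergence via Lemma~23.3 of \cite{van2000asymptotic} and continuity of the limit law. Your explicit attention to the continuity of the extreme-value limit $H$ (so that $F_M$ has no atom at $q_\alpha$) is a point the paper leaves implicit, but it is the same argument, just spelled out more carefully.
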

\begin{proof}
	The proof follows now along the same lines as in Corollary \ref{cor:intervals} with statistic $t_j$ replaced by $M$.
\end{proof}
Similarly as in case of intervals, we can use semiparametric bootstrap to  approximate critical values $q_{H_{0j},\alpha}$ and $q_{H_{0}, \alpha}$ for tests in \eqref{eq:test_proc_individual} and \eqref{eq:test_proc_multiple}. Thanks to the relation between intervals and test, the consistency proof for intervals applies also for testing procedures with some changes of the notation \citep[cf.][]{reluga2021jasa}.

\section{Simulation study}\label{sec:simulations}

We carry out numerical simulation studies to evaluate finite sample properties of our bootstrap intervals.  In all scenarios we generate outcomes from a linear mixed effect model in \eqref{eq:LMM_b} with a fixed and a random intercept, and a uniformly distributed covariate, that is, we set $x_{ij1}=1$, $z_{ij} = 1$,  $x_{ij2}\sim U(0,1)$. We consider three types of sample sizes to mimic joint asymptotics: in setting 1 we have $m = 25$, $n_j =5$, in setting 2: $m = 50$, $n_j =10$, and in setting 3: $m = 75$, $n_j =15$. Furthermore, in each simulation, errors and random effects are drawn from one of the following distributions: standard normal, Student's t with 6 degrees of freedom, or chi-square with 5 degrees of freedom. The distributions are always centred to zero and re-scaled to variances $var(e_{ij})$ and $var(u_{j})$ 
which are indicated in Tables \ref{tab:CI}-\ref{tab:SPI}. 
We compare the performance of our individual and simultaneous intervals in \eqref{eq:ind_interval_b} at the $\alpha = 0.05$ level obtained using semiparmetric bootstrap, parametric bootstrap as in \cite{chatterjee2008parametric} and \cite{reluga2021simultaneous} as well as intervals constructed using large-sample asymptotic approximations, that is, with a $(1-\alpha/2)$ and  $(1-\alpha/2m)$ quantiles from normal distributions (the latter by Bonferroni correction). We employ following criteria to assess the performance of intervals: empirical coverage probability for individual and simultaneous intervals, that is, $\mathrm{Cov}_{ind} =1/mS\sum_{j=1}^{m}\sum_{s=1}^{S}\bm{1} \{ \theta^{(s)}_j\in I^{*(s)}_{j,\alpha} \}$ and $\mathrm{Cov}_{sim} =1/S\sum_{s=1}^{S}\bm{1} \{ \theta^{(s)}_j\in I^{*(s)}_{\alpha} \,\, \forall j \in [m] \}$; average widths of the intervals $\mathrm{Width}=1/mS\sum_{j=1}^{m}\sum_{s=1}^{S}\rho^{(s)}_j$; the variance of widths $\mathrm{VarWidth}=1/m(S-1) \sum_{j=1}^{m} \sum_{s=1}^{S} 
\left(\rho^{(s)}_j-\bar{\rho}_j \right)^2$, all of them over $S = 1000$ simulation runs, where $\rho^{(s)}_j=2q^{(s)}_{(\cdot)} \hat{\sigma}^{(s)}_j$, $\bar{\rho}_j=\sum_{s=1}^{S} \rho^{(s)}_j/S$ and $(\cdot)$ stands for the pair $j,\alpha$ for individual intervals and for $\alpha$ for simultaneous intervals.

Table \ref{tab:CI} displays the numerical performance of individual intervals for mixed effect $\theta_j$ in \eqref{eq:theta}. In this case, the performance of all methods seems to be similar -- the distribution of errors and random effects does hardly affect the empirical coverage, even for the intervals derived asymptotically. 
Our simulations indicate a surprisingly 
strong robustness to distributional misspecifications and the application of bootstrapping seems superfluous in this setting. 
\begin{table}[htb]
	\centering
	\def~{\hphantom{0}}
	{
		\begin{tabular}{|cccccccccccc|}\hline
			&          &   & \multicolumn{3}{c}{Coverage} & \multicolumn{3}{c}{Length} & \multicolumn{3}{c|}{Variance of length} \\
			$e_{ij}$         & $u_j$        & M   & $S_1$       & $S_2$      & $S_3$      & $S_1$      & $S_2$      & $S_3$     & $S_1$       & $S_2$      & $S_3$      \\\hline
			&          & A & 953      & 949     & 949     & 1558    & 1142    & 955    & 10       & 1       & 1       \\
			$N(1)$    & $N(0.5)$ & S & 954      & 947     & 947     & 1573    & 1138    & 951    & 13       & 3       & 3       \\
			&          & P & 962      & 948     & 947     & 1627    & 1137    & 950    & 27       & 3       & 3       \\
			&          & A & 946      & 948     & 949     & 1180    & 855     & 704    & 12       & 2       & 2       \\
			$t_6(0.5)$  & $t_6(1)$   & S & 947      & 947     & 949     & 1197    & 857     & 704    & 15       & 2       & 2       \\
			&          & P & 947      & 946     & 948     & 1191    & 852     & 701    & 14       & 2       & 2       \\
			&          & A & 948      & 950     & 950     & 1209    & 865     & 710    & 14       & 2       & 2       \\
			$\chi^2_5(0.5)$ & $\chi^2_5(1)$  & S & 945      & 945     & 945     & 1219    & 862     & 706    & 16       & 2       & 2       \\
			&          & P & 948      & 949     & 949     & 1218    & 864     & 707    & 15       & 2       & 2       \\
			&          & A & 948      & 951     & 952     & 1184    & 855     & 704    & 13       & 2       & 2       \\
			$\chi^2_5(0.5)$ & $t_6(1)$   & S & 946      & 946     & 946     & 1196    & 853     & 701    & 14       & 2       & 2       \\
			&          & P & 950      & 949     & 949     & 1193    & 852     & 702    & 13       & 2       & 2       \\
			&          & A & 947      & 949     & 949     & 1608    & 1183    & 980    & 21       & 3       & 3       \\
			$t_6(1)$    & $\chi^2_5(0.5)$ & S & 949      & 948     & 948     & 1632    & 1186    & 981    & 26       & 5       & 5       \\
			&          & P & 950      & 948     & 938     & 1641    & 1180    & 981   & 27       & 4       & 4   \\\hline   
	\end{tabular}}
	\caption{Empirical coverage, width and variance of widths of individual intervals at $\alpha = 0.05$ level. $S_1$, Setting 1; $S_2$, Setting 2, $S_3$, Setting 3; M, Method; A, asymptotic; S, semiparametric bootstrap; P, parametric bootstrap. All numerical entries are multiplied by 1000.  }
	\label{tab:CI}
\end{table}
\begin{table}[htb]
	\centering
	\def~{\hphantom{0}}{
		\begin{tabular}{|ccc ccc ccc ccc|}\hline
			&           &         &\multicolumn{3}{c}{Coverage} & \multicolumn{3}{c}{Length} & \multicolumn{3}{c|}{Variance of length} \\
			$e_{ij}$         & $u_j$         & M&  $S_1$    & $S_2$    & $S_3$   & $S_1$   & $S_2$   & $S_3$  & $S_1$    & $S_2$    & $S_3$   \\\hline
			&           & A   &    931     & 953      & 948     & 2456   & 1918    & 1658    & 25      & 4        & 1       \\
			$N(1)$    & $N(0.5)$  & S      & 937     & 955      & 952     & 2522   & 1923    & 1659    & 30      & 5        & 2       \\
			&           & P      & 971     & 958      & 951     & 2824   & 1925    & 1658    & 1356    & 4        & 2       \\
			&           & A      & 887     & 915      & 922     & 1860   & 1435    & 1222    & 30      & 4        & 1       \\
			$t_6(0.5)$  & $t_6(1)$    & S      & 924     & 945      & 952     & 1975   & 1509    & 1279    & 55      & 13       & 6       \\
			&           & P      & 900     & 919      & 918     & 1907   & 1439    & 1223    & 33      & 5        & 2       \\
			&           & A      & 886     & 866      & 905     & 1906   & 1452    & 1233    & 34      & 5        & 1       \\
			$\chi_5^2(0.5)$ & $\chi_5^2(1)$   & S      & 921     & 917      & 937     & 2041   & 1555    & 1316    & 57      & 10       & 3       \\
			&           & P      & 897     & 867      & 911     & 1948   & 1459    & 1234    & 36      & 5        & 2       \\
			&           & A   & 896     & 874      & 902     & 1867   & 1436    & 1222    & 31      & 4        & 1       \\
			$\chi^2_5(0.5)$ & $t_6(1)$    & S    & 932     & 924      & 938     & 2002   & 1537    & 1302    & 52      & 9        & 3       \\
			&           & P    & 913     & 884      & 905     & 1910   & 1440    & 1224    & 31      & 5        & 2       \\
			&           & A    & 899     & 898      & 914     & 2535   & 1986    & 1702    & 52      & 10       & 3       \\
			$t_6(1)$    & $\chi^2_5(0.5)$ & S   & 935     & 935      & 944     & 2694   & 2087    & 1779    & 90      & 29       & 13      \\
			&           & P   & 916     & 916      & 834     & 2657   & 1994    & 1544    & 149     & 9        & 251  \\\hline  
	\end{tabular}}
	\caption{Empirical coverage, width and variance of widths of simultaneous intervals at $\alpha = 0.05$ level. 	$S_1$, Setting 1; $S_2$, Setting 2, $S_3$, Setting 3; M, Method; A, asymptotic; S, semiparametric bootstrap; P, parametric bootstrap. All numerical entries are multiplied by 1000.  }
	\label{tab:SPI}
\end{table}
The situation changes dramatically in Table \ref{tab:SPI} which shows numerical performance of simultaneous intervals. In this case, the results are similar for all methods only when errors and random effects are normally distributed \citep[cf. results in][]{reluga2021simultaneous}. Regardless of the distribution of errors and/or random effects, the performance of intervals obtained using semiparametric bootstrap is superior to other methods. In fact, their application leads to serious undercoverage even for large sample sizes under departures from normality. 
Furthermore, the average length of semiparametric bootstrap intervals is not excessively wide in comparison to other methods. We can thus conclude that the application of our semiparametric bootstrap-based method leads to a satisfactory numerical performance even under considerable departures from normality. In comparison to other robust techniques, it does not involve robust estimation or any data transformation, which is extremely appealing for practitioners. 

\section{Discussion}\label{sec:conclusions}
Linear mixed effects are popular to predict cluster-level parameters in various domains. Yet, the underlying assumptions which should guarantee their satisfactory numerical performance are often violated in practice. We studied to what extent the application of a simple bootstrapping scheme might mitigate the negative effects of distributional misspecificaitons without the need to reach for more advanced techniques such as robust estimation or data transformation. Our numerical study confirms that mixed effects are fairly robust to such misspecification unless they undergo complex transformations. This is particularly interesting for their application in small area estimation in which mixed effects are often used in nonlinear poverty indicators \citep{RojasEtal2020} for which the application of semiparameric bootstrap inference could be particularly beneficial. 

\appendix
\section*{Appendix 1}
\subsection*{Regularity conditions}
We adopt some regularity conditions from \cite{shao2000consistency} and \cite{reluga2021simultaneous}. Let $\vartheta = (\beta, \delta)$, $\hat{\vartheta} = (\hat{\beta}, \hat{\delta})$ and $\vartheta_0\in \Theta \subset \mathbb{R}^{p+h+1}$ be the true parameter value. We assume that
\begin{enumerate}
	\item Score equation  $s_n(\vartheta) = \sum_{j=1}^{m}\sum_{i = 1}^{n_j} \psi (y_{ij},\vartheta)$ is well defined if:
	(a) $s_n(\vartheta)$ is continuous and differentiable for each fixed $y$, (b) $E\{s_n(\vartheta) \} = 0$ at $\vartheta_0$, (c) $\vartheta_0$ is an interior point of $\Theta$ and the estimator $\hat{\vartheta}$ is an interior point of the neighborhood of $\vartheta_0$.
	\item $\liminf\limits_{n}\lambda[n^{-1}\mathrm{var}\{s_n(\vartheta)\}]>0$ and  $\liminf\limits_{n}\lambda[-n^{-1}E \{\nabla s_n(\vartheta)\}]>0$ where $\nabla s_n(\vartheta) =\frac{\partial \psi(\vartheta)}{\partial \vartheta}$ and $\lambda[A]$ indicates the smallest eigenvalue of matrix $A$.
	\item There exists $b>0$ such that $E\norm{\psi(y_{ij}, \vartheta}^{2+b}<\infty$, and $E(h_N(y_{ij})^{1+b}$ in a compact neighbourhood $N$, where $h_C(y_{ij})=\sup_{\vartheta\in N}\norm{\nabla s_n(\vartheta)}$.
	\item Convergence: $m\to \infty$, $n_j\to \infty$.
	\item $V_j(\delta)$ has a linear structure in $\delta$, $j = 1, \dots, m$.
\end{enumerate}
Conditions 1--3 ensure that one can use the score equation $s_n$ to estimate fixed parameters $\vartheta$ up to a vanishing term. Condition $4$ is required to ensure the convergence of mixed effect predictors, whereas Condition $5$ implies that the second derivatives of $R_j$ and $G_j$ are 0. The assumption of $m\to\infty$, which is common in small area estimation literature once the modelling assumptions are satisfied \citep[cf.][in the context of simultaneous inference]{reluga2021simultaneous}, must be replaced by the joint asymptotics in Condition 4 to ensure the convergence of cumulative distributions functions of mixed effects under departures from normality \citep[cf.][]{jiang1998asymptotic}. Nevertheless, this assumption is important only for the theoretical derivations -- in practice bootstrap intervals perform well for a sample size as small as $n_j = 5$ (cf., results in Tables \ref{tab:CI}-\ref{tab:SPI}).

\section*{Appendix 2}
\subsection*{Additional simulation results}\label{sec:additional_simulations}	
	In this section, we present additional simulations results using different MSE estimators. 
	Analytical MSE can be decomposed as follows 
	\begin{align}
		\mathrm{MSE}(\hat{\theta}_j )&=\mathrm{MSE}(\tilde{\theta}_j )+E\left(\hat{\theta}_j 
		-\tilde{\theta}_j \right)^2+2E\left\{(\tilde{\theta}_j -\theta_j)(\hat{\theta}_j -\tilde{\theta}_j )\right\}\nonumber\\
		&=g_{1j}({\delta}) + g_{2j}(\delta) +  g_{3j}(\delta) +2E\left\{(\tilde{\theta}_j -\theta_j)(\hat{\theta}_j -\tilde{\theta}_j )\right\}\label{eq:MSE},
	\end{align}
	where $\mathrm{MSE}(\tilde{\theta}_j)$ accounts for the variability of $\theta_j$ when the variance components $\delta$ are known. It particular, $g_{1j}$ accounts for the variability of $\theta_j$ for known $\beta$, $g_{2j}$ for the estimation of $\beta$, $g_{3j}$ quantifies the square difference between $\hat{\theta}_j$ and $\tilde{\theta}_j$. There exists a vast literature to estimate it \citep[see, for example,][]{rao2015small}. The last term in \eqref{eq:MSE} disappears under normality of errors and random effects. 
	Let $b_j^T=k^T_j-o_j^TX_j$ with $o^T_j=l^t_jG_jZ^t_jV^{-1}_j$. Under linear mixed model, the analytical estimator of variability $\mathrm{mse}_{L}(\hat{\theta}_j )$ reduces to
	\begin{equation*}\label{eq:MSE_lin}
		\mathrm{mse}_{L}(\hat{\theta}_j )= g_{1j}(\hat{{\delta}}) + g_{2j}(\hat{\delta}) + 2 g_{3j}(\hat{\delta}),
	\end{equation*}
	and $g_1$, $g_2$ and $g_3$ are defined in expression \eqref{eq:MSE_b_first_term}:
	\begin{equation}\label{eq:MSE_b_first_term}
		\begin{split}
			g_{1j}(\delta)&=l^t_j(G_j-G_jZ^t_jV_j^{-1}Z_jG_j)l_j,\\ 
			g_{2j}(\delta)&=b_j^t\left(\sum_{j=1}^{m}X^t_{j}V^{-1}_jX_j\right)^{-1}b_j,\\
			g_{3j}(\delta)&=\text{tr}\left\{ (\partial o_j^t/\partial\delta) V_j (\partial o_j^t/\partial\delta)^tV_A(\hat{\delta}) \right\},
		\end{split}
	\end{equation}
	where $V_A(\hat{\delta})$ the asymptotic covariance matrix. In addition, $E\left\{\mathrm{mse}_{L}(\hat{\theta}_j )\right\}=\mathrm{MSE}(\theta_j )+o(m^{-1})$. First, we complete the numerical results from Section \ref{sec:simulations} by considering additional simulation scenarios. Tables \ref{tab:CI_g1}-\ref{tab:SPI_g1} show the numerical results with $\hat{\sigma}^2_j = g_{1j}$.
	\begin{table}[htb]
		\centering
		\def~{\hphantom{0}}
		{
			\begin{tabular}{|cccccccccccc|}\hline
				&           &   & \multicolumn{3}{c}{Coverage} & \multicolumn{3}{c}{Length} & \multicolumn{3}{c|}{Variance of length} \\
				$e_{ij}$        & $u_j$         & M & $S_1$       & $S_2$      & $S_3$      & $S_1$      & $S_2$      & $S_3$     & $S_1$       & $S_2$      & $S_3$      \\\hline
				&           & A & 948      & 949     & 949     & 1191    & 856     & 705    & 6        & 1       & 1       \\
				$N(0.5)$ & $N(1)$    & S & 948      & 948     & 947     & 1201    & 855     & 702    & 7        & 1       & 1       \\
				&           & P & 949      & 948     & 947     & 1202    & 855     & 702    & 7        & 1       & 1       \\
				&           & A & 952      & 949     & 950     & 1529    & 1137    & 952    & 16       & 3       & 3       \\
				$t_6(1)$   & $t_5(0.5)$  & S & 953      & 948     & 949     & 1552    & 1138    & 951    & 20       & 4       & 4       \\
				&           & P & 963      & 947     & 951     & 1627    & 1133    & 948    & 62       & 4       & 4       \\
				&           & A & 949      & 949     & 950     & 1614    & 1183    & 981    & 22       & 3       & 3       \\
				$\chi_5(1)$  & $\chi_5(0.5)$ & S & 949      & 946     & 946     & 1633    & 1181    & 977    & 26       & 4       & 4       \\
				&           & P & 951      & 951     & 949     & 1647    & 1182    & 978    & 27       & 4       & 4       \\
				&           & A & 949      & 949     & 950     & 1614    & 1183    & 981    & 22       & 3       & 3       \\
				$t_6(0.5)$ & $\chi_5(1)$   & S & 949      & 946     & 946     & 1633    & 1181    & 977    & 26       & 4       & 4       \\
				&           & P & 951      & 951     & 948     & 1647    & 1182    & 707    & 27       & 4       & 4    \\  \hline
		\end{tabular}}
		\caption{Empirical coverage, width and variance of widths of individual intervals at $\alpha = 0.05$-level, $\sigma^2_j = g_{1j}$. $S_1$, Setting 1; $S_2$, Setting 2, $S_3$, Setting 3; M, Method; A, asymptotic; S, semiparametric bootstrap; P, parametric bootstrap. All numerical entries are multiplied by 1000.}
		\label{tab:CI_g1}
	\end{table}
	
	\begin{table}[htb]
		\centering
		\def~{\hphantom{0}}
		{
			\begin{tabular}{|cccccccccccc|}\hline
				&           &   & \multicolumn{3}{c}{Coverage} & \multicolumn{3}{c}{Length} & \multicolumn{3}{c|}{Variance of length} \\
				$e_{ij}$         & $u_j$         & M & $S_1$       & $S_2$      & $S_3$      & $S_1$      & $S_2$      & $S_3$     & $S_1$        & $S_2$      & $S_3$     \\\hline
				&           & A & 932      & 955     & 944     & 1879    & 1438    & 1224   & 14        & 2       & 1      \\
				$N(0.5)$ & $N(1)$    & S & 945      & 956     & 947     & 1919    & 1443    & 1225   & 17        & 2       & 1      \\
				&           & P & 946      & 961     & 946     & 1924    & 1445    & 1225   & 16        & 2       & 1      \\
				&           & A & 916      & 926     & 920     & 2411    & 1909    & 1653   & 40        & 8       & 3      \\
				$t_6(1)$   & $t_6(0.5)$  & S & 949      & 951     & 951     & 2557    & 2002    & 1725   & 66        & 21      & 10     \\
				&           & P & 946      & 932     & 921     & 2823    & 1918    & 1655   & 1790      & 8       & 3      \\
				&           & A & 911      & 884     & 902     & 2544    & 1985    & 1703   & 55        & 8       & 3      \\
				$\chi^2_5(1)$  & $\chi^2_5(0.5)$ & S & 930      & 922     & 934     & 2703    & 2102    & 1802   & 87        & 16      & 5      \\
				&           & P & 929      & 886     & 903     & 2668    & 1996    & 1707   & 101       & 9       & 3      \\
				&           & A & 911      & 884     & 902     & 2544    & 1985    & 1703   & 55        & 8       & 3      \\
				$t_6(0.5)$ & $\chi_5(1)$   & S & 930      & 922     & 934     & 2703    & 2102    & 1802   & 87        & 16      & 5      \\
				&           & P & 929      & 886     & 922     & 2668    & 1996    & 1233   & 101       & 9       & 2   \\  \hline
		\end{tabular}}
		\caption{Empirical coverage, width and variance of widths of simultaneous intervals at $\alpha = 0.05$-level, $\sigma^2_j = g_{1j}$. $S_1$, Setting 1; $S_2$, Setting 2, $S_3$, Setting 3; M, Method; A, asymptotic; S, semiparametric bootstrap; P, parametric bootstrap. All numerical entries are multiplied by 1000. }
		\label{tab:SPI_g1}
	\end{table}
	
	Alternatively, one could estimate MSE using bootstrap. The most straightforward bootstrap estimator is $\mathrm{MSE}^*(\hat{\theta}^*_j) = E^*\left(\hat{\theta}^*_j - \theta^*_j\right)^2 $ which might be approximated by  
	\begin{eqnarray}\label{eq:MSE_boot2}
		\mathrm{MSE}^*_{B1}(\hat{\theta}^*_j)\approx \mathrm{mse}^*_{B2}(\hat{\theta}_j ) = \frac{1}{B}\sum_{b=1}^{B} \left(\hat{\theta}^{*(b)}_j-\theta^{*(b)}_j\right)^2   , 
	\end{eqnarray}
	and $\hat{\theta}^{*(b)}_j$, $\theta^{*(b)}_j$ as defined in Section \ref{sec:nonparametric_booststrap}, calculated from the $b^{th}$ bootstrap sample. Tables \ref{tab:CI_b1}-\ref{tab:SPI_b1} display the performance of individual and simultaneous intervals constructed using $\mathrm{MSE}^*_{B1}$. As we can see, a general trend is the same as in case of $\sigma^2_j =g_{1j}$, that is there is not much different between the performance of parametric and semiparametric bootstrap individual intervals, but this changes dramatically if we consider simultaneous intervals. 
	
	\begin{table}[htb]
		\centering
		\def~{\hphantom{0}}
		{
			\begin{tabular}{|cccccccccccc|}\hline
				$e_{ij}$        & $u_j$         &   & \multicolumn{3}{c}{Coverage} & \multicolumn{3}{c}{Length} & \multicolumn{3}{c|}{Variance of length} \\
				&               &   & $S_1$    & $S_2$   & $S_3$   & $S_1$   & $S_2$   & $S_3$  & $S_1$    & $S_2$   & $S_3$   \\\hline
				&               & S & 943      & 947     & 947     & 1181    & 852     & 701    & 7        & 1       & 1       \\
				$N(0.5)$      & $N(1)$        & P & 944      & 947     & 946     & 1181    & 852     & 701    & 7        & 1       & 1       \\
				&               & S & 946      & 946     & 947     & 1521    & 1129    & 948    & 12       & 3       & 1       \\
				$N(1)$        & $N(0.5)$      & P & 939      & 946     & 946     & 1488    & 1128    & 947    & 15       & 3       & 1       \\
				&               & S & 943      & 946     & 948     & 1175    & 853     & 702    & 14       & 2       & 1       \\
				$t_6(0.5)$    & $t_6(1)$      & P & 943      & 945     & 947     & 1170    & 849     & 700    & 13       & 2       & 1       \\
				&               & S & 946      & 945     & 948     & 1501    & 1126    & 947    & 18       & 4       & 2       \\
				$t_6(1)$      & $t_6(0.5)$    & P & 954      & 949     & 950     & 1560    & 1137    & 952    & 21       & 3       & 1       \\
				&               & S & 941      & 944     & 945     & 1200    & 859     & 705    & 15       & 2       & 1       \\
				$\chi_5(0.5)$ & $\chi_5((1)$  & P & 945      & 948     & 949     & 1201    & 861     & 706    & 15       & 2       & 1       \\
				&               & S & 943      & 944     & 945     & 1590    & 1173    & 974    & 24       & 4       & 2       \\
				$\chi_5(1)$   & $\chi_5(0.5)$ & P & 943      & 948     & 948     & 1591    & 1176    & 976    & 24       & 4       & 2       \\
				&               & S & 943      & 944     & 945     & 1590    & 1173    & 974    & 24       & 4       & 2       \\
				$t_6(0.5)$    & $\chi_5(1)$   & P & 943      & 948     & 947     & 1591    & 1176    & 706    & 24       & 4       & 1       \\
				&               & S & 940      & 944     & 945     & 1173    & 849     & 699    & 14       & 2       & 1       \\
				$\chi_5(0.5)$ & $t_6(1)$      & P & 946      & 949     & 948     & 1172    & 849     & 701    & 13       & 2       & 1       \\
				&               & S & 943      & 946     & 947     & 1591    & 1179    & 978    & 24       & 5       & 2       \\
				$t_6(1)$      & $\chi_5(0.5)$ & P & 943      & 947     & 826     & 1588    & 1174    & 1534   & 24       & 4       & 2  \\  \hline
		\end{tabular}}
		\caption{Empirical coverage, width and variance of widths of individual intervals at $\alpha = 0.05$-level, $\sigma^2_j = MSE^*_{B1}(\hat{\theta}^*_j)$. $S_1$, Setting 1; $S_2$, Setting 2, $S_3$, Setting 3; M, Method; S, semiparametric bootstrap; P, parametric bootstrap. All numerical entries are multiplied by 1000.}
		\label{tab:CI_b1}
	\end{table}

	\begin{table}[htb]
		\centering
		\def~{\hphantom{0}}
		{
			\begin{tabular}{|cccccccccccc|}\hline
				$e_{ij}$        & $u_j$         &   & \multicolumn{3}{c}{Coverage} & \multicolumn{3}{c}{Length} & \multicolumn{3}{c|}{Variance of length} \\
				&               &   & $S_1$    & $S_2$   & $S_3$   & $S_1$   & $S_2$   & $S_3$  & $S_1$    & $S_2$   & $S_3$   \\\hline
				$N(0.5)$      & $N(1)$        & S & 930      & 945     & 944     & 1860    & 1430    & 1219   & 17       & 3       & 2       \\
				&               & P & 928      & 948     & 939     & 1865    & 1432    & 1218   & 16       & 3       & 2       \\
				$N(1)$        & $N(0.5)$      & S & 922      & 947     & 946     & 2385    & 1895    & 1646   & 28       & 7       & 3       \\
				&               & P & 908      & 942     & 942     & 2355    & 1896    & 1645   & 36       & 6       & 3       \\
				$t_6(0.5)$    & $t_6(1)$      & S & 906      & 940     & 944     & 1922    & 1502    & 1276   & 58       & 17      & 8       \\
				&               & P & 875      & 919     & 909     & 1846    & 1426    & 1217   & 32       & 6       & 2       \\
				$t_6(1)$      & $t_6(0.5)$    & S & 923      & 938     & 938     & 2428    & 1972    & 1715   & 63       & 25      & 12      \\
				&               & P & 909      & 926     & 921     & 2460    & 1909    & 1652   & 53       & 8       & 3       \\
				$\chi_5(0.5)$ & $\chi_5((1)$  & S & 908      & 918     & 937     & 2009    & 1555    & 1316   & 64       & 12      & 4       \\
				&               & P & 884      & 856     & 898     & 1894    & 1448    & 1227   & 36       & 6       & 2       \\
				$\chi_5(1)$   & $\chi_5(0.5)$ & S & 921      & 922     & 931     & 2616    & 2094    & 1799   & 90       & 19      & 7       \\
				&               & P & 898      & 870     & 893     & 2513    & 1975    & 1696   & 56       & 10      & 4       \\
				$t_6(0.5)$    & $\chi_5(1)$   & S & 921      & 922     & 931     & 2616    & 2094    & 1799   & 90       & 19      & 7       \\
				&               & P & 898      & 870     & 914     & 2513    & 1975    & 1227   & 56       & 10      & 2       \\
				$\chi_5(0.5)$ & $t_6(1)$      & S & 919      & 918     & 936     & 1962    & 1534    & 1301   & 56       & 11      & 4       \\
				&               & P & 898      & 876     & 899     & 1850    & 1427    & 1217   & 31       & 6       & 2       \\
				$t_6(1)$      & $\chi_5(0.5)$ & S & 918      & 931     & 938     & 2597    & 2071    & 1772   & 94       & 34      & 16      \\
				&               & P & 871      & 906     & 825     & 2510    & 1973    & 1535   & 59       & 11      & 249    \\\hline
		\end{tabular}}
		\caption{Empirical coverage, width and variance of widths of simultaneous intervals at $\alpha = 0.05$-level, $\sigma^2_j = MSE^*_{B1}(\hat{\theta}^*_j)$. 	$S_1$, Setting 1; $S_2$, Setting 2, $S_3$, Setting 3; M, Method; S, semiparametric bootstrap; P, parametric bootstrap. All numerical entries are multiplied by 1000.}
		\label{tab:SPI_b1}
	\end{table}
	
	We can define several other bootstrap estimators. For example, $\mathrm{MSE}^*_{3T}$ directly approximates each term in \eqref{eq:MSE} by bootstrap, that is
	\begin{equation}\label{eq:MSE_3t}
		\mathrm{MSE}^*_{3T}(\hat{\theta}^*_j)=\mathrm{MSE}_B^*(\tilde{\theta}^*_j)+
		E^*(\hat{\theta}^*_j-\tilde{\theta}^*_j)^2+2E^*\left\{(\tilde{\theta}^*_j-\theta^*_j)(\hat{\theta}^*_j-\tilde{\theta}^*_j)\right\}.
	\end{equation}
	Tables \ref{tab:CI_3t}-\ref{tab:SPI_3t} display the performance of individual and simultaneous intervals constructed using $MSE^*_{3T}$. 
	
	\begin{table}[htb]
		\centering
		\def~{\hphantom{0}}
		{
			\begin{tabular}{|cccccccccccc|}\hline
				$e_{ij}$        & $u_j$         &   & \multicolumn{3}{c}{Coverage} & \multicolumn{3}{c}{Length} & \multicolumn{3}{c|}{Variance of length} \\
				&               &   & $S_1$    & $S_2$   & $S_3$   & $S_1$   & $S_2$   & $S_3$  & $S_1$    & $S_2$   & $S_3$   \\\hline
			$N(0.5)$      & $N(1)$            & S & 943      & 947     & 947     & 1181    & 852     & 701    & 7        & 1       & 1       \\
			   &         & P & 944      & 947     & 946     & 1181    & 852     & 701    & 7        & 1       & 1       \\
				$N(1)$        & $N(0.5)$          & S & 946      & 946     & 947     & 1521    & 1129    & 948    & 12       & 3       & 1       \\
				  &      & P & 939      & 946     & 946     & 1488    & 1128    & 947    & 15       & 3       & 1       \\
				$t_6(0.5)$    & $t_6(1)$               & S & 943      & 946     & 948     & 1175    & 853     & 702    & 14       & 2       & 1       \\
				  &      & P & 943      & 945     & 947     & 1170    & 849     & 700    & 13       & 2       & 1       \\
				$t_6(1)$      & $t_6(0.5)$      & S & 946      & 945     & 948     & 1501    & 1126    & 947    & 18       & 4       & 2       \\
				 &     & P & 954      & 949     & 950     & 1560    & 1137    & 952    & 21       & 3       & 1       \\
				$\chi_5(0.5)$ & $\chi_5(1)$             & S & 941      & 944     & 945     & 1200    & 859     & 705    & 15       & 2       & 1       \\
				&  & P & 945      & 948     & 949     & 1201    & 861     & 706    & 15       & 2       & 1       \\
				$\chi_5(1)$   & $\chi_5(0.5)$          & S & 943      & 944     & 945     & 1590    & 1173    & 974    & 24       & 4       & 2       \\
				& & P & 943      & 948     & 948     & 1591    & 1176    & 976    & 24       & 4       & 2       \\
				$t_6(0.5)$    & $\chi_5(1)$            & S & 943      & 944     & 945     & 1590    & 1173    & 974    & 24       & 4       & 2       \\
				&   & P & 943      & 948     & 947     & 1591    & 1176    & 706    & 24       & 4       & 1       \\
				$\chi_5(0.5)$ & $t_6(1)$              & S & 940      & 944     & 945     & 1173    & 849     & 699    & 14       & 2       & 1       \\
				&      & P & 946      & 949     & 948     & 1172    & 849     & 701    & 13       & 2       & 1       \\
				$t_6(1)$      & $\chi_5(0.5)$               & S & 943      & 946     & 947     & 1591    & 1179    & 978    & 24       & 5       & 2       \\
				& & P & 943      & 947     & 826     & 1588    & 1174    & 1534   & 24       & 4       & 249   \\\hline 
		\end{tabular}}
		\caption{Empirical coverage, width and variance of widths of individual intervals at $\alpha = 0.05$-level, $\sigma^2_j = MSE^*_{3T}(\hat{\theta}^*_j)$. $S_1$, Setting 1; $S_2$, Setting 2, $S_3$, Setting 3; M, Method; S, semiparametric bootstrap; P, parametric bootstrap. All numerical entries are multiplied by 1000.}
		\label{tab:CI_3t}
	\end{table}

	\begin{table}[htb]
		\centering
		\def~{\hphantom{0}}
		{
		\begin{tabular}{|cccccccccccc|}\hline
			$e_{ij}$        & $u_j$         &   & \multicolumn{3}{c}{Coverage} & \multicolumn{3}{c}{Length} & \multicolumn{3}{c|}{Variance of length} \\
			&               &   & $S_1$    & $S_2$   & $S_3$   & $S_1$   & $S_2$   & $S_3$  & $S_1$    & $S_2$   & $S_3$   \\\hline
			$N(0.5)$ & $N(1)$   & S      & 930      & 947   & 941   & 1861   & 1430  & 1219  & 17       & 3     & 2     \\
			&          & P      & 928      & 948   & 938   & 1865   & 1432  & 1218  & 16       & 3     & 2     \\
			$N(1)$   & $N(0.5)$ & S      & 922      & 946   & 945   & 2386   & 1895  & 1646  & 28       & 7     & 3     \\
			&          & P      & 912      & 940   & 944   & 2357   & 1896  & 1645  & 35       & 6     & 3     \\
			$t_6(0.5)$ & $t_6(1)$   & S      & 906      & 940   & 944   & 1923   & 1502  & 1276  & 58       & 17    & 8     \\
			&          & P      & 874      & 918   & 909   & 1846   & 1427  & 1217  & 32       & 6     & 2     \\
			$t_6(1)$   & $t_6(0.5)$ & S      & 926      & 940   & 938   & 2430   & 1973  & 1715  & 63       & 25    & 12    \\
			&          & P      & 862      & 916   & 913   & 2323   & 1888  & 1643  & 60       & 10    & 4     \\
			$\chi_5(0.5)$   & $\chi_5(1)$     & S      & 910      & 915   & 937   & 2009   & 1555  & 1316  & 63       & 12    & 4     \\
			&          & P      & 885      & 855   & 897   & 1895   & 1448  & 1227  & 36       & 6     & 2     \\
			$\chi_5(1)$    & $\chi_5(0.5)$   & S      & 919      & 922   & 931   & 2618   & 2094  & 1799  & 90       & 19    & 7     \\
			&          & P      & 899      & 868   & 892   & 2515   & 1975  & 1696  & 57       & 10    & 4     \\
			 $t_6(0.5)$    &  $\chi_5(1)$    & S      & 919      & 922   & 931   & 2618   & 2094  & 1799  & 90       & 19    & 7     \\
			&          & P      & 899      & 868   & 914   & 2515   & 1975  & 1227  & 57       & 10    & 2     \\
		    $\chi_5(0.5)$ & $t_6(1)$      & S      & 920      & 918   & 936   & 1963   & 1534  & 1301  & 56       & 11    & 4     \\
			&          & P      & 898      & 874   & 899   & 1850   & 1427  & 1217  & 31       & 6     & 2     \\
			$t_6(1)$     & $\chi_5(0.5)$  & S      & 916      & 931   & 940   & 2599   & 2071  & 1772  & 94       & 34    & 16    \\
			&          & P      & 874      & 907   & 824   & 2511   & 1973  & 1535  & 59       & 11    & 249 \\\hline 
	\end{tabular}}
		\caption{Empirical coverage, width and variance of widths of simultaneous intervals at $\alpha = 0.05$-level, $\sigma^2_j = MSE^*_{3T}(\hat{\theta}^*_j)$. $S_1$, Setting 1; $S_2$, Setting 2, $S_3$, Setting 3; M, Method; S, semiparametric bootstrap; P, parametric bootstrap. All numerical entries are multiplied by 1000. }
		\label{tab:SPI_3t}
	\end{table}
	
	It is well known that $\mathrm{MSE}^*_{3T}$ leads to estimators with bias of order $O(m^{-1})$. To obtain a bias of order $o(m^{-1})$, \cite{butar2003measures} advocate approximating only intractable terms in \eqref{eq:MSE} by bootstrap. Specifically, with 
	$g_{1d}( \cdot )$ and $g_{2d}(\cdot )$ as defined in (\ref{eq:MSE_b_first_term}), one takes
	\begin{eqnarray}\nonumber
		\mathrm{MSE}^*_{SPA}(\hat{\theta}^*_j) &=& 2 \left\{g_{1j}(\hat{\delta}) + g_{2j}(\hat{\delta})\right\} 
		- E^*\left\{g_{1j}(\hat{\delta}^*) + g_{2j}(\hat{\delta}^*)\right\} + E^*\left(\hat{\theta}^*_j-\tilde{\theta}^*_j\right)^2 \\ &
		+& 2E^*\left\{(\tilde{\theta}^*_j-\theta^*_j)(\hat{\theta}^*_j-\tilde{\theta}^*_j)\right\}, \label{eq:MSE_spa}
	\end{eqnarray}
	where the last term is zero under normality. Tables \ref{tab:CI_spa}-\ref{tab:SPI_spa} display the performance of individual and simultaneous intervals constructed using $MSE^*_{SPA}$. 
	\begin{table}[htb]
		\centering
		\def~{\hphantom{0}}{
		\begin{tabular}{|cccccccccccc|}\hline
				$e_{ij}$        & $u_j$         &   & \multicolumn{3}{c}{Coverage} & \multicolumn{3}{c}{Length} & \multicolumn{3}{c|}{Variance of length} \\
				&               & M & $S_1$    & $S_2$ & $S_3$ & $S_1$  & $S_2$ & $S_3$ & $S_1$    & $S_2$ & $S_3$ \\\hline
				$N(0.5)$      & $N(1)$        & S & 943      & 947   & 947   & 1181   & 852   & 701   & 7        & 1     & 1     \\
				&               & P & 944      & 947   & 946   & 1181   & 852   & 701   & 7        & 1     & 1     \\
				$N(1)$        & $N(0.5)$      & S & 946      & 946   & 947   & 1521   & 1129  & 948   & 12       & 3     & 1     \\
				&               & P & 939      & 946   & 946   & 1488   & 1128  & 947   & 15       & 3     & 1     \\
				$t_6(0.5)$    & $t_6(1)$      & S & 943      & 946   & 948   & 1175   & 853   & 702   & 14       & 2     & 1     \\
				&               & P & 943      & 945   & 947   & 1170   & 849   & 700   & 13       & 2     & 1     \\
				$t_6(1)$      & $t_6(0.5)$    & S & 946      & 945   & 948   & 1501   & 1126  & 947   & 18       & 4     & 2     \\
				&               & P & 937      & 945   & 947   & 1467   & 1123  & 945   & 26       & 4     & 2     \\
				$\chi_5(0.5)$ & $\chi_5((1)$  & S & 941      & 944   & 945   & 1200   & 859   & 705   & 15       & 2     & 1     \\
				&               & P & 945      & 948   & 949   & 1201   & 861   & 706   & 15       & 2     & 1     \\
				$\chi_5(1)$   & $\chi_5(0.5)$ & S & 943      & 944   & 945   & 1590   & 1173  & 974   & 24       & 4     & 2     \\
				&               & P & 943      & 948   & 948   & 1591   & 1176  & 976   & 24       & 4     & 2     \\
				$t_6(0.5)$    & $\chi_5(1)$   & S & 943      & 944   & 945   & 1590   & 1173  & 974   & 24       & 4     & 2     \\
				&               & P & 943      & 948   & 947   & 1591   & 1176  & 706   & 24       & 4     & 1     \\
				$\chi_5(0.5)$ & $t_6(1)$      & S & 940      & 944   & 945   & 1173   & 849   & 699   & 14       & 2     & 1     \\
				&               & P & 946      & 949   & 948   & 1172   & 849   & 701   & 13       & 2     & 1     \\
				$t_6(1)$      & $\chi_5(0.5)$ & S & 943      & 946   & 947   & 1591   & 1179  & 978   & 24       & 5     & 2     \\
				&               & P & 943      & 947   & 830   & 1588   & 1174  & 1538  & 24       & 4     & 248 \\ \hline
		\end{tabular}}
		\caption{Empirical coverage, width and variance of widths of individual intervals at $\alpha = 0.05$-level, $\sigma^2 = MSE^*_{SPA}$. $S_1$, Setting 1; $S_2$, Setting 2, $S_3$, Setting 3; M, Method; S, semiparametric bootstrap; P, parametric bootstrap. All numerical entries are multiplied by 1000. }
		\label{tab:CI_spa}
	\end{table}
	\begin{table}[htb]
		\def~{\hphantom{0}}
		\centering
		{\begin{tabular}{|cccccccccccc|}\hline
				$e_{ij}$        & $u_j$         &   & \multicolumn{3}{c}{Coverage} & \multicolumn{3}{c}{Length} & \multicolumn{3}{c|}{Variance of length} \\
				&               & M & $S_1$    & $S_2$   & $S_3$   & $S_1$   & $S_2$   & $S_3$  & $S_1$    & $S_2$   & $S_3$   \\\hline
				$N(0.5)$      & $N(1)$        & S & 934      & 956     & 946     & 1864    & 1433    & 1221   & 16       & 2       & 1       \\
				&               & P & 930      & 951     & 942     & 1868    & 1435    & 1221   & 15       & 2       & 1       \\
				$N(1)$        & $N(0.5)$      & S & 924      & 945     & 945     & 2390    & 1899    & 1650   & 26       & 5       & 2       \\
				&               & P & 910      & 947     & 947     & 2360    & 1900    & 1649   & 33       & 5       & 2       \\
				$t_6(0.5)$    & $t_6(1)$      & S & 908      & 944     & 949     & 1927    & 1506    & 1280   & 56       & 16      & 7       \\
				&               & P & 877      & 918     & 918     & 1850    & 1430    & 1220   & 30       & 5       & 2       \\
				$t_6(1)$      & $t_6(0.5)$    & S & 924      & 947     & 949     & 2436    & 1978    & 1719   & 61       & 24      & 11      \\
				&               & P & 860      & 921     & 919     & 2327    & 1892    & 1646   & 58       & 9       & 3       \\
				$\chi_5(0.5)$ & $\chi_5((1)$  & S & 910      & 917     & 938     & 2014    & 1560    & 1320   & 62       & 11      & 3       \\
				&               & P & 886      & 859     & 907     & 1898    & 1451    & 1230   & 34       & 5       & 2       \\
				$\chi_5(1)$   & $\chi_5(0.5)$ & S & 921      & 918     & 935     & 2623    & 2100    & 1805   & 87       & 17      & 6       \\
				&               & P & 898      & 878     & 901     & 2519    & 1980    & 1700   & 54       & 8       & 3       \\
				$t_6(0.5)$    & $\chi_5(1)$   & S & 921      & 918     & 935     & 2623    & 2100    & 1805   & 87       & 17      & 6       \\
				&               & P & 898      & 878     & 919     & 2519    & 1980    & 1230   & 54       & 8       & 2       \\
				$\chi_5(0.5)$ & $t_6(1)$      & S & 921      & 919     & 939     & 1968    & 1539    & 1306   & 54       & 10      & 3       \\
				&               & P & 897      & 878     & 904     & 1854    & 1430    & 1220   & 29       & 5       & 2       \\
				$t_6(1)$      & $\chi_5(0.5)$ & S & 914      & 933     & 944     & 2604    & 2078    & 1778   & 92       & 32      & 15      \\
				&               & P & 880      & 902     & 833     & 2515    & 1977    & 1538   & 56       & 9       & 249 \\   \hline
		\end{tabular}}
		\caption{Empirical coverage, width and variance of widths of simultaneous intervals at $\alpha = 0.05$-level, $\sigma^2 = MSE^*_{SPA}$. $S_1$, Setting 1; $S_2$, Setting 2, $S_3$, Setting 3; M, Method; S, semiparametric bootstrap; P, parametric bootstrap. All numerical entries are multiplied by 1000. }
		\label{tab:SPI_spa}
	\end{table}
In contrast, \cite{hall2006nonparametric} propose a bias reduction with the aid of a double-bootstrap
	$	\mathrm{MSE}^{**}_{B2}(\hat{\theta}^{**}_j)=E^{**}\left( \hat{\theta}^{**}_j-\theta^{**}_j\right) $. In this bootstrapping scheme, for each sample $b$ we must generate $c=1,...,C$ bootstrap samples (in practice, $C=1$ works quite well), where
	\begin{align*}\label{eq:theta_boot2}
		\theta^{**}_j = k^T_j\beta^{**}+l^T_ju^{**}_j, 	
		& &\tilde{\theta}^{**}_j = \theta_j(\delta^{**}) = k^T_j\tilde{\beta}^{**}+l^T_j\tilde{u}^{**}_j,
		& &\hat{\theta}^{**}_j = \theta_j(\hat{\delta}^{**}) = k^T_j\hat{\beta}^{**}+l^T_j\hat{u}^{**}_j.
	\end{align*}
	We can thus consider double bootstrap bias-corrected MSE estimator which is defined as follows 
	\begin{equation*}\label{eq:MSE_bc1}
		\mathrm{MSE}^*_{BC}(\hat{\theta}^*_j) = 2 \mathrm{MSE}^*_{B1}(\hat{\theta}^*_j)- \mathrm{MSE}^{**}_{B2}(\hat{\theta}^{**}_j)  \ .
	\end{equation*}
	Tables \ref{tab:CI_bc}-\ref{tab:SPI_bc} display the performance of individual and simultaneous intervals constructed using $MSE^*_{BC}$. 
	\begin{table}[htb]
		\def~{\hphantom{0}}
		\centering
		{\begin{tabular}{|cccccccccccc|} \hline
				$e_{ij}$      & $u_j$         &   & \multicolumn{3}{c}{Coverage} & \multicolumn{3}{c}{Length} & \multicolumn{3}{c|}{Variance of length} \\
				&               & M & $S_1$    & $S_2$   & $S_3$   & $S_1$   & $S_2$   & $S_3$  & $S_1$    & $S_2$   & $S_3$   \\\hline
				$N(0.5)$      & $N(1)$        & S & 943      & 947     & 947     & 1181    & 852     & 701    & 7        & 1       & 1       \\
				&               & P & 944      & 947     & 946     & 1181    & 852     & 701    & 7        & 1       & 1       \\
				$N(1)$        & $N(0.5)$      & S & 946      & 946     & 947     & 1521    & 1129    & 948    & 12       & 3       & 1       \\
				&               & P & 939      & 946     & 946     & 1488    & 1128    & 947    & 15       & 3       & 1       \\
				$t_6(0.5)$    & $t_6(1)$      & S & 943      & 946     & 948     & 1175    & 853     & 702    & 14       & 2       & 1       \\
				&               & P & 943      & 945     & 947     & 1170    & 849     & 700    & 13       & 2       & 1       \\
				$t_6(1)$      & $t_6(0.5)$    & S & 946      & 945     & 948     & 1501    & 1126    & 947    & 18       & 4       & 2       \\
				&               & P & 937      & 945     & 947     & 1467    & 1123    & 945    & 26       & 4       & 2       \\
				$\chi_5(0.5)$ & $\chi_5((1)$  & S & 941      & 944     & 945     & 1200    & 859     & 705    & 15       & 2       & 1       \\
				&               & P & 945      & 948     & 949     & 1201    & 861     & 706    & 15       & 2       & 1       \\
				$\chi_5(1)$   & $\chi_5(0.5)$ & S & 943      & 944     & 945     & 1590    & 1173    & 974    & 24       & 4       & 2       \\
				&               & P & 943      & 948     & 948     & 1591    & 1176    & 976    & 24       & 4       & 2       \\
				$t_6(0.5)$    & $\chi_5(1)$   & S & 943      & 944     & 945     & 1590    & 1173    & 974    & 24       & 4       & 2       \\
				&               & P & 943      & 948     & 947     & 1591    & 1176    & 706    & 24       & 4       & 1       \\
				$\chi_5(0.5)$ & $t_6(1)$      & S & 940      & 944     & 945     & 1173    & 849     & 699    & 14       & 2       & 1       \\
				&               & P & 946      & 949     & 948     & 1172    & 849     & 701    & 13       & 2       & 1       \\
				$t_6(1)$      & $\chi_5(0.5)$ & S & 943      & 946     & 947     & 1591    & 1179    & 978    & 24       & 5       & 2       \\
				&               & P & 943      & 947     & 817     & 1588    & 1174    & 1542   & 24       & 4       & 256  \\  \hline
		\end{tabular}}
		\caption{Empirical coverage, width and variance of widths of individual intervals at $\alpha = 0.05$-level, $\sigma^2 = MSE^*_{BC}$. 	$S_1$, Setting 1; $S_2$, Setting 2, $S_3$, Setting 3; M, Method; S, semiparametric bootstrap; P, parametric bootstrap. All numerical entries are multiplied by 1000. }
		\label{tab:CI_bc}
	\end{table}
	
	\begin{table}[htb]
		\def~{\hphantom{0}}
		\centering
		{\begin{tabular}{|cccccccccccc|}\hline
				$e_{ij}$      & $u_j$         &   & \multicolumn{3}{c}{Coverage} & \multicolumn{3}{c}{Length} & \multicolumn{3}{c|}{Variance of length} \\
				&               & M & $S_1$    & $S_2$   & $S_3$   & $S_1$   & $S_2$   & $S_3$  & $S_1$    & $S_2$   & $S_3$   \\\hline
				$N(0.5)$      & $N(1)$        & S & 928      & 947     & 938     & 1865    & 1433    & 1221   & 22       & 5       & 3       \\
				&               & P & 926      & 940     & 942     & 1873    & 1439    & 1224   & 22       & 7       & 4       \\
				$N(1)$        & $N(0.5)$      & S & 918      & 942     & 940     & 2392    & 1903    & 1654   & 35       & 12      & 7       \\
				&               & P & 911      & 941     & 948     & 2365    & 1905    & 1654   & 45       & 12      & 8       \\
				$t_6(0.5)$    & $t_6(1)$      & S & 901      & 941     & 946     & 1925    & 1504    & 1278   & 63       & 19      & 10      \\
				&               & P & 876      & 922     & 908     & 1854    & 1434    & 1223   & 38       & 9       & 5       \\
				$t_6(1)$      & $t_6(0.5)$    & S & 920      & 927     & 934     & 2436    & 1979    & 1721   & 71       & 31      & 17      \\
				&               & P & 864      & 907     & 907     & 2332    & 1897    & 1651   & 70       & 17      & 9       \\
				$\chi_5(0.5)$ & $\chi_5((1)$  & S & 907      & 916     & 931     & 2013    & 1557    & 1317   & 69       & 15      & 6       \\
				&               & P & 884      & 859     & 890     & 1903    & 1455    & 1234   & 42       & 10      & 5       \\
				$\chi_5(1)$   & $\chi_5(0.5)$ & S & 918      & 914     & 927     & 2624    & 2099    & 1804   & 98       & 26      & 13      \\
				&               & P & 897      & 878     & 887     & 2525    & 1986    & 1705   & 68       & 18      & 9       \\
				$t_6(0.5)$    & $\chi_5(1)$   & S & 918      & 914     & 927     & 2624    & 2099    & 1804   & 98       & 26      & 13      \\
				&               & P & 897      & 878     & 913     & 2525    & 1986    & 1234   & 68       & 18      & 5       \\
				$\chi_5(0.5)$ & $t_6(1)$      & S & 918      & 917     & 932     & 1966    & 1536    & 1302   & 61       & 14      & 6       \\
				&               & P & 893      & 866     & 902     & 1859    & 1434    & 1223   & 37       & 9       & 5       \\
				$t_6(1)$      & $\chi_5(0.5)$ & S & 915      & 925     & 932     & 2605    & 2078    & 1779   & 102      & 41      & 22      \\
				&               & P & 868      & 900     & 825     & 2521    & 1983    & 1543   & 71       & 18      & 256    \\\hline
		\end{tabular}}
		\caption{Empirical coverage, width and variance of widths of simultaneous intervals at $\alpha = 0.05$-level, $\sigma^2 = MSE^*_{BC}$. $S_1$, Setting 1; $S_2$, Setting 2, $S_3$, Setting 3; M, Method; S, semiparametric bootstrap; P, parametric bootstrap. All numerical entries are multiplied by 1000. }
		\label{tab:SPI_bc}
	\end{table}
	To sum up, the performance of individual and simultaneous intervals is not strongly affected by the choice of the estimator of $\hat{\sigma}^2_j$. The most important factors in the performance of our method is the statistic we are trying to estimate and the appropriate bootstrap method. 
	
\bibliography{citexx3}  

\begin{thebibliography}{}

\bibitem[Beirlant et~al., 2004]{beirlant2004statistics}
Beirlant, J., Goegebeur, Y., Segers, J., and Teugels, J.~L. (2004).
\newblock {\em Statistics of extremes: theory and applications}, volume 558.
\newblock John Wiley \& Sons.

\bibitem[Butar and Lahiri, 2003]{butar2003measures}
Butar, F.~B. and Lahiri, P. (2003).
\newblock On measures of uncertainty of empirical {B}ayes small-area
  estimators.
\newblock {\em Journal of Statistical Planning and Inference}, 112(1):63--76.

\bibitem[Carpenter et~al., 2003]{carpenter2003novel}
Carpenter, J.~R., Goldstein, H., and Rasbash, J. (2003).
\newblock A novel bootstrap procedure for assessing the relationship between
  class size and achievement.
\newblock {\em Journal of the Royal Statistical Society: Series C (Applied
  Statistics)}, 52(4):431--443.

\bibitem[Chambers and Chandra, 2013]{chambers2013random}
Chambers, R. and Chandra, H. (2013).
\newblock A random effect block bootstrap for clustered data.
\newblock {\em Journal of Computational and Graphical Statistics},
  22(2):452--470.

\bibitem[Chambers and Tzavidis, 2006]{ChambersEtal2006}
Chambers, R.~L. and Tzavidis, N. (2006).
\newblock M-quantile models for small area estimation.
\newblock {\em Biometrika}, 93(2):255--268.

\bibitem[Chatterjee et~al., 2008]{chatterjee2008parametric}
Chatterjee, S., Lahiri, P., and Li, H. (2008).
\newblock Parametric bootstrap approximation to the distribution of {EBLUP} and
  related prediction intervals in linear mixed models.
\newblock {\em Ann. Statist.}, 36(3):1221--1245.

\bibitem[Davison and Hinkley, 1997]{davison1997bootstrap}
Davison, A.~C. and Hinkley, D.~V. (1997).
\newblock {\em Bootstrap methods and their application}.
\newblock Number~1. Cambridge university press.

\bibitem[Embrechts et~al., 2013]{embrechts2013modelling}
Embrechts, P., Kl{\"u}ppelberg, C., and Mikosch, T. (2013).
\newblock {\em Modelling extremal events: for insurance and finance},
  volume~33.
\newblock Springer Science \& Business Media.

\bibitem[Field et~al., 2010]{field2010bootstrapping}
Field, C., Pang, Z., and Welsh, A.~H. (2010).
\newblock Bootstrapping robust estimates for clustered data.
\newblock {\em Journal of the American Statistical Association},
  105(492):1606--1616.

\bibitem[Field and Welsh, 2007]{field2007bootstrapping}
Field, C.~A. and Welsh, A.~H. (2007).
\newblock Bootstrapping clustered data.
\newblock {\em Journal of the Royal Statistical Society: Series B (Statistical
  Methodology)}, 69(3):369--390.

\bibitem[Flores-Agreda and Cantoni, 2019]{flores2019bootstrap}
Flores-Agreda, D. and Cantoni, E. (2019).
\newblock Bootstrap estimation of uncertainty in prediction for generalized
  linear mixed models.
\newblock {\em Computational Statistics \& Data Analysis}, 130:1--17.

\bibitem[Hall and Maiti, 2006a]{hall2006nonparametric}
Hall, P. and Maiti, T. (2006a).
\newblock Nonparametric estimation of mean-squared prediction error in
  nested-error regression models.
\newblock {\em Ann. Statist.}, 34(4):1733--1750.

\bibitem[Hall and Maiti, 2006b]{hall2006parametric}
Hall, P. and Maiti, T. (2006b).
\newblock On parametric bootstrap methods for small area prediction.
\newblock {\em J. R. Statist. Soc. B}, 68(2):221--238.

\bibitem[Henderson, 1950]{henderson1950}
Henderson, C.~R. (1950).
\newblock Estimation of genetic parameters.
\newblock {\em Biometrics}, 6(1):186–187.

\bibitem[Jiang, 1998]{jiang1998asymptotic}
Jiang, J. (1998).
\newblock Asymptotic properties of the empirical {BLUP} and {BLUE} in mixed
  linear models.
\newblock {\em Statistica Sinica}, 8(1):861--885.

\bibitem[Jiang, 2007]{Jiang2007}
Jiang, J. (2007).
\newblock {\em {Linear and Generalized Linear Mixed Models and Their
  Applications}}.
\newblock Springer Series in Statistics.

\bibitem[Laird and Ware, 1982]{laird1982random}
Laird, N.~M. and Ware, J.~H. (1982).
\newblock Random-effects models for longitudinal data.
\newblock {\em Biometrics}, pages 963--974.

\bibitem[Lee and Nelder, 1996]{lee1996hierarchical}
Lee, Y. and Nelder, J.~A. (1996).
\newblock Hierarchical generalized linear models.
\newblock {\em Journal of the Royal Statistical Society: Series B},
  58(4):619--656.

\bibitem[Lombard{\'\i}a and Sperlich, 2008]{lombardiasperlich2008}
Lombard{\'\i}a, M.~J. and Sperlich, S. (2008).
\newblock Semiparametric inference in generalized mixed effects models.
\newblock {\em J. R. Statist. Soc. B}, 70(5):913--930.

\bibitem[McCullagh, 2000]{mccullagh2000resampling}
McCullagh, P. (2000).
\newblock Resampling and exchangeable arrays.
\newblock {\em Bernoulli}, pages 285--301.

\bibitem[Morris, 2002]{morris2002blups}
Morris, J.~S. (2002).
\newblock The blups are not “best” when it comes to bootstrapping.
\newblock {\em Statistics \& Probability Letters}, 56(4):425--430.

\bibitem[Opsomer et~al., 2008]{opsomer2008splines}
Opsomer, J.~D., Claeskens, G., Ranalli, M.~G., Kauermann, G., and Breidt, F.~J.
  (2008).
\newblock Nonparametric small area estimation using penalized spline
  regression.
\newblock {\em J. R. Statist. Soc. B}, 70:265--286.

\bibitem[O’Shaughnessy and Welsh, 2018]{o2018bootstrapping}
O’Shaughnessy, P. and Welsh, A.~H. (2018).
\newblock Bootstrapping longitudinal data with multiple levels of variation.
\newblock {\em Computational Statistics \& Data Analysis}, 124:117--131.

\bibitem[Rao and Molina, 2015]{rao2015small}
Rao, J. N.~K. and Molina, I. (2015).
\newblock {\em Small area estimation}.
\newblock John Wiley \& Sons.

\bibitem[Reluga et~al., 2021a]{reluga2021jasa}
Reluga, K., Lombardía, M.-J., and Sperlich, S. (2021a).
\newblock Simultaneous inference for empirical best predictors with a poverty
  study in small areas.
\newblock {\em J. Am. Statist. Ass.}, To appear(ja):1--33.

\bibitem[Reluga et~al., 2021b]{reluga2021simultaneous}
Reluga, K., Lombardía, M.~J., and Sperlich, S.~A. (2021b).
\newblock Simultaneous inference for linear mixed model parameters with an
  application to small area estimation.
\newblock {\em arXiv:1903.02774}.

\bibitem[Rojas-Perilla et~al., 2020]{RojasEtal2020}
Rojas-Perilla, N., Pannier, S., Schmid, T., and Tzavidis, N. (2020).
\newblock Data-driven transformations in small area estimation.
\newblock {\em J. R. Statist. Soc. A}, 183(1):121--148.

\bibitem[Samanta and Welsh, 2013]{samanta2013bootstrapping}
Samanta, M. and Welsh, A.~H. (2013).
\newblock Bootstrapping for highly unbalanced clustered data.
\newblock {\em Computational Statistics \& Data Analysis}, 59:70--81.

\bibitem[Shao et~al., 2000]{shao2000consistency}
Shao, J., K{\"u}bler, J., and Pigeot, I. (2000).
\newblock Consistency of the bootstrap procedure in individual bioequivalence.
\newblock {\em Biometrika}, 87(3):573--585.

\bibitem[Sinha and Rao, 2009]{sinha2009robust}
Sinha, S.~K. and Rao, J. (2009).
\newblock Robust small area estimation.
\newblock {\em Canadian Journal of Statistics}, 37(3):381--399.

\bibitem[Van~der Vaart, 2000]{van2000asymptotic}
Van~der Vaart, A.~W. (2000).
\newblock {\em Asymptotic statistics}, volume~3.
\newblock Cambridge university press.

\bibitem[Verbeke and Molenberghs, 2000]{verbekeMolenberghs2000}
Verbeke, G. and Molenberghs, G. (2000).
\newblock {\em Linear Mixed Models for Longitudinal Data}.
\newblock Springer.

\end{thebibliography}

\end{document}